\documentclass[11pt]{amsart}
\usepackage[margin=1in]{geometry}
\usepackage[utf8]{inputenc}
\usepackage{amsmath,amssymb,amsthm,mathtools}
\usepackage{enumitem}
\usepackage{tikz}
\usepackage{flafter,placeins}
\usetikzlibrary{arrows.meta,decorations.pathreplacing}
\definecolor{figblue}{RGB}{34,85,130}
\definecolor{figorange}{RGB}{180,90,30}
\definecolor{figgray}{RGB}{100,100,100}
\usepackage{hyperref}
\hypersetup{colorlinks=true,linkcolor=blue,citecolor=blue,urlcolor=blue}
\theoremstyle{definition}
\newtheorem{definition}{Definition}[section]
\theoremstyle{plain}
\newtheorem{theorem}[definition]{Theorem}
\newtheorem{proposition}[definition]{Proposition}
\newtheorem{lemma}[definition]{Lemma}
\newtheorem{corollary}[definition]{Corollary}
\theoremstyle{definition}
\newtheorem{example}[definition]{Example}
\theoremstyle{remark}
\newtheorem{remark}[definition]{Remark}
\DeclareMathOperator{\arctanh}{arctanh}
\DeclareMathOperator{\br}{br}
\DeclareMathOperator{\Lip}{Lip}
\DeclareMathOperator{\dist}{dist}

\title[Branching Gaps and Ising Coexistence]{Branching Gaps, Coexistence, and Boundary Influence for the Ising Model on Spherically Symmetric Trees}

\author{Farrukh Mukhamedov}
\address{Department of Mathematical Sciences, College of Science, The United Arab Emirates University, P.O. Box 15551, Al Ain, Abu Dhabi, UAE}
\email{far75m@yandex.ru; farrukh.m@uaeu.ac.ae}
\author{Otabek Khakimov}
\address{V.I. Romanovskiy Institute of Mathematics, Uzbekistan Academy of Sciences, 9, University str., 100174, Tashkent, Uzbekistan}
\email{khakimovo86@gmail.com}
\keywords{Ising model, Gibbs uniqueness, spherically symmetric tree, external field, branching gaps, return map}

\begin{document}
\begin{abstract}
In this paper, we study the ferromagnetic nearest-neighbor Ising model on rooted
spherically symmetric trees with one or \(q\) children per vertex,
according to a prescribed sequence of branching levels. Unbounded
runs of one-child levels force uniqueness at every nonzero homogeneous
field and every finite inverse temperature. We quantify this effect
through the locations of sufficiently long nonbranching stretches.
Conversely, uniqueness need not entail any uniform rate: for each
prescribed vanishing sequence, we construct a density-one tree whose
finite-volume boundary influence decays more slowly along a subsequence.
The tree retains branching number \(q\), although its nonzero-field
coexistence region disappears. For eventually bounded gaps, we prove
an optimal uniform comparison with the periodic tree having the
largest allowed gap and obtain an explicit interval of coexistence
fields. Thus bounded gaps are equivalent, within this family, to
coexistence at some finite temperature and nonzero field. For eventually
periodic gaps, Gibbs uniqueness is equivalent to fixed-point uniqueness
of the full-period return map. A negative Schwarzian derivative yields
a complete classification: in the supercritical regime the map has
three fixed points inside a closed coexistence interval, two at its
endpoints, and one outside. The nonzero endpoints are automatically
nondegenerate folds, with square-root splitting of the merging branches.
\end{abstract}
\maketitle

\section{Introduction}
In the ferromagnetic Ising model, branching growth is closely linked to the zero-field transition. A homogeneous nonzero field raises the question of whether border influence may persist without branching, even while the exponential growth rate remains constant. We answer this question for leafless, spherically symmetric trees with offspring numbers in \(\{1,q\}\), where \(q\ge2\). In this study, uniqueness refers to the uniqueness of all Dobrushin-Lanford-Ruelle (DLR) Gibbs measures.

Let \(\alpha=\{n_1<n_2<\cdots\}\subset\mathbb Z_{\ge0}\) be the
branching levels, let \(\ell_j=n_{j+1}-n_j-1\) count the intervening
one-child levels, and put \(\theta=\tanh(\beta J)\) and \(H=\beta B\).
The results develop in three directions.
\begin{enumerate}[label=\textup{(\roman*)}]
\item \emph{Geometric uniqueness and a matching coexistence comparison.}
Theorem~\ref{thm:unbounded-gaps} proves uniqueness at every \(H\ne0\)
when \(\sup_j\ell_j=\infty\). For \(H>0\), a long gap first makes
the limiting minus field positive; a logarithmic contraction identifies the
extremal Gibbs measures. For eventually \(L\)-bounded gaps,
Theorem~\ref{thm:bounded-gap-comparison} proves coexistence wherever
the constant-gap-\(L\) tree has coexistence. This is the optimal
guarantee uniform over that class.
Corollary~\ref{cor:bounded-gap-explicit} gives an explicit positive
field interval when \(q\theta^{L+1}>1\), and
Corollary~\ref{cor:gap-dichotomy} completes the geometric dichotomy
for existence of nonzero-field coexistence at some temperature.

\item \emph{Quantitative boundary influence and its limitations.}
Theorem~\ref{thm:quantitative-gap} bounds finite-volume boundary
influence using the locations of sufficiently long gaps.
Theorem~\ref{thm:equal-density} constructs two trees of branching
density one and branching number \(q\) with different nonzero-field
coexistence sets. Theorem~\ref{thm:arbitrarily-slow} strengthens this
comparison: for each prescribed vanishing rate and fixed parameters
in the regular tree's nonzero-field coexistence region, a density-one
tree can have uniqueness and slower boundary-influence decay along
a subsequence. Thus growth invariants determine neither field
coexistence nor a uniform rate of boundary forgetting.

\item \emph{A complete periodic phase diagram.}
Theorem~\ref{thm:periodic-return} identifies the extremal branch fields
with the least and greatest fixed points of the full-period return
map. Theorem~\ref{thm:periodic-phase} then classifies all of its fixed
points. If a period has \(r\) branching levels and \(P\) levels in
total, the zero-field threshold is \(q^r\theta^P=1\). Above this
threshold, there are exactly three fixed points inside the coexistence
interval, two at its endpoints, and one outside. The nonzero-field
endpoints are nondegenerate folds. These counts concern periodic
boundary laws; the uniqueness criterion applies to all Gibbs measures.
\end{enumerate}

\subsection*{Relation to earlier results}
The boundary-law description and connection between Markov chains and Gibbs fields on trees are traditional \cite{Preston1974, Spitzer1975, Zachary1983, Georgii2011}.
Lyons \cite{Lyons1989} found the zero-field critical threshold using the branching number. Pemantle and Peres \cite[Corollary~2.4]{PemantlePeres2010} define the critical criteria as the convergence of \(\sum_{N\ge1}\theta^{-2N}q^{-2b_N}\), where \(b_N\) represents branching levels below \(N\). Appendix~\ref{sec:zero-field-appendix} covers the straightforward scalar proof and density ramifications.

Jonasson and Steif ~\cite{JonassonSteif1999} demonstrate coexistence and couplings on nonamenable graphs with restricted degrees, as well as uniqueness at nonzero fields on amenable quasi-transitive graphs. These trees contain unbounded gaps, allowing for extended routes of degree-two vertices with bounded edge boundaries. These trees are not quasi-transitive since the lengths between branching vertices are unlimited. The quasi-transitive uniqueness theorem alone does not result in Theorem~\ref{thm:unbounded-gaps}. In contrast, constrained gaps result in a limited subdivision of a regular rooted tree. The qualitative existence assertion in Corollary~\ref{cor:gap-dichotomy} aligns with the nonamenability hypothesis. The quantitative assertions include the explicit field bound and optimum comparison to the largest-gap periodic tree.

Geometry-dependent nonzero-field phase diagrams are not new. Schonmann and Tanaka \cite{SchonmannTanaka1998} found that attaching a predetermined number of leaves to each vertex of a regular tree resulted in a much smaller coexistence area. With finite leaf ornamentation, the number of branches remains fixed. Our equal-density comparison applies to leafless spherically symmetric \(\{1,q\}\)-offspring trees. Both branching densities remain one, and a zero-density set of nonbranching levels removes the nonzero-field coexistence area. The comparison maintains the growth invariants but alters amenability. The slow-convergence construction distinguishes infinite-volume uniqueness from finite-volume rates.

H\"aggstr\"om, Schonmann and Steif~\cite{HaggstromSchonmannSteif2000}
study random dilution, including persistence of nonzero-field
transitions under sufficiently small Bernoulli dilution of nonamenable
graphs of bounded degree. Our suppressed branching levels are
deterministic and synchronized across each generation. Bissacot, Endo
and van Enter~\cite{BissacotEndoVanEnter2017} study level-dependent
fields approaching a critical field on a fixed regular Cayley tree.
Here the applied field and coupling are homogeneous, and the level
degrees vary. The quantitative observable is the difference of the
finite-volume root magnetizations under plus and minus boundary
conditions.

The return-map framework employs standard boundary-law approaches. The global fixed-point classification uses the negative Schwartz derivative of the basic recursion and its behavior during composition. The Schwarzian identities are widely used in one-dimensional dynamics \cite{deMeloVanStrien1993}, and we present the necessary quick computations here. This results in a precise count and automatic nondegeneracy of nonzero-field folds, beyond the fixed-point reduction.

Periodic compositions and distinctions between uniqueness and
correlation decay also occur for general ferromagnetic two-spin
systems. Guo and Lu~\cite[Section~5]{GuoLu2016} analyze an
alternating-offspring tree by composing its level recursion maps.
Here the applied Ising field is homogeneous, and the periodic
classification is complemented by a density-one construction with
uniqueness and an arbitrarily prescribed slow rate along a subsequence.

Section~\ref{sec:setup} fixes the notation and classical benchmarks.
Section~\ref{sec:unbounded} proves the unbounded-gap theorem and the
equal-density comparison. Section~\ref{sec:boundary-influence} gives
the quantitative bounds and the slow-convergence construction.
Section~\ref{sec:periodic} treats periodic gaps, and
Section~\ref{sec:bounded-gaps} proves the bounded-gap comparison.
Section~\ref{sec:sufficient} records supplementary criteria.

\section{Setup and Preliminaries}
\label{sec:setup}

\subsection{Tree geometry and the specification}
Fix an integer \(q\ge2\) and an infinite increasing branching sequence
\(\alpha=\{n_j\}_{j\ge1}\subset\mathbb Z_{\ge0}\).
In the rooted tree \(T_{\alpha,q}=(V,E)\), every vertex at distance
\(m\) from the root \(o\) has
\[
a_m=\begin{cases}q,&m\in\alpha,\\1,&m\notin\alpha\end{cases}
\]
children. Write \(W_m=\{x:\dist(o,x)=m\}\),
\(\Lambda_m=\bigcup_{k=0}^mW_k\), and
\[
b_N=\#(\alpha\cap\{0,\ldots,N-1\}),\qquad
\underline d=\liminf_{N\to\infty}\frac{b_N}{N},\qquad
\overline d=\limsup_{N\to\infty}\frac{b_N}{N}.
\]
Successive multiplication by the offspring numbers gives
\(|W_N|=\prod_{m<N}a_m=q^{b_N}\). The standard spherically
symmetric branching-number formula \cite{LyonsPeres2016} yields
\begin{equation}\label{eq:branching-number}
\br(T_{\alpha,q})=\liminf_{N\to\infty}|W_N|^{1/N}
=q^{\underline d}.
\end{equation}
Indeed, a unit flow assigning mass \(|W_N|^{-1}\) to each vertex of
\(W_N\) meets the capacity bound \(C\lambda^{-N}\) for every
\(\lambda<\liminf|W_N|^{1/N}\), while level cutsets give the reverse
inequality. We also write \(\underline d(\alpha),\overline d(\alpha)\)
when comparing trees.

Let \(J>0\), \(0<\beta<\infty\), and \(B\in\mathbb R\), and put
\[
K=\beta J,\qquad \theta=\tanh K\in(0,1),\qquad H=\beta B.
\]
The finite-volume Gibbs law on \(\Lambda_{N-1}\) with spins
\(\eta\) fixed on \(W_N\) is proportional to
\[
\exp\!\left\{
K\!\sum_{\substack{\{x,y\}\in E\\x,y\in\Lambda_{N-1}}}\sigma_x\sigma_y
+K\!\sum_{\substack{\{x,y\}\in E\\x\in\Lambda_{N-1},\ y\in W_N}}
\sigma_x\eta_y+H\sum_{x\in\Lambda_{N-1}}\sigma_x\right\}.
\]
Write \(\mu_{N,B}^\pm\) for uniform plus/minus boundary conditions
and \(\mu_B^\pm\) for their local weak limits. Ferromagnetic
attractiveness gives their existence and
\(\mu_B^-\preceq\mu\preceq\mu_B^+\) for every DLR measure \(\mu\)
\cite{Georgii2011}. Consequently, Gibbs uniqueness is equivalent to
\(\mu_B^- =\mu_B^+\).

\subsection{Cavity fields and gap maps}
Define
\begin{equation}\label{F_theta}
F_\theta(u)=\arctanh(\theta\tanh u),\qquad
F_\theta(\pm\infty)=\pm K.
\end{equation}
It is odd, strictly increasing and bounded, with
\begin{equation}\label{eq:derivative}
F_\theta'(u)=\frac{\theta(1-\tanh^2u)}{1-\theta^2\tanh^2u}.
\end{equation}
This derivative is even, is at most \(\theta<1\), and strictly
decreases on \((0,\infty)\).

\begin{proposition}
\label{prop:field-recursion}
The spherically symmetric cavity recursion is
\begin{equation}\label{eq:cavity}
u_m=H+a_mF_\theta(u_{m+1}),\qquad m\ge0.
\end{equation}
For uniform boundary conditions at depth \(N\), its terminal values
are \(u_N^{\pm,N}=\pm\infty\). For each fixed \(m\), the plus fields
decrease and the minus fields increase as \(N\to\infty\), with finite
limits \(u_m^\pm\) satisfying \eqref{eq:cavity} and
\[
H-a_mK\le u_m^-\le u_m^+\le H+a_mK.
\]
The model is unique if and only if \(u_m^-=u_m^+\) at one, and hence
every, level \(m\).
\end{proposition}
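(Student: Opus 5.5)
We first derive the recursion on a finite volume and then pass to the limit using monotonicity.

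\emph{Derivation of the recursion.} Fix \(N\) and a uniform boundary condition \(\eta\equiv\pm1\) on \(W_N\). For \(x\in W_m\), \(m<N\), let \(Z_x(\sigma_x)\) be the partition function of the subtree below \(x\) with \(\sigma_x\) fixed, including the field at \(x\). Define \(u_x\) by \(Z_x(+)/Z_x(-)=e^{2u_x}\). Summing over the spin of each child \(y\) gives
\[
\frac{\sum_{s}e^{Ks}Z_y(s)}{\sum_s e^{-Ks}Z_y(s)}=\frac{\cosh(K+u_y)}{\cosh(K-u_y)}=e^{2F_\theta(u_y)},
\]
using \(\tfrac12\log\frac{\cosh(K+u)}{\cosh(K-u)}=\arctanh(\tanh K\tanh u)\). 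Since \(x\) has \(a_m\) children, we get \(u_x=H+\sum_y F_\theta(u_y)\). Spherical symmetry and the uniform boundary make \(u_x\) depend only on \(m\), so \eqref{eq:cavity} follows. A boundary spin \(\eta_y=\pm1\) is a frozen spin, which corresponds to \(u_N=\pm\infty\) with \(F_\theta(\pm\infty)=\pm K\). The root marginal is \(\mu_{N}^\pm(\sigma_o=1)=(1+e^{-2u_0^{\pm,N}})^{-1}\). More generally, the law of \(\sigma\) on \(\Lambda_m\) has an explicit density in terms of the \(u^{\pm,N}_{m+1}\) at the level-\((m+1)\) vertices, through the terms \(F_\theta(u_{m+1})\sigma_x\) attached to each \(x\in W_m\).

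\emph{Monotonicity and bounds.} Each step \(u\mapsto H+a_mF_\theta(u)\) is strictly increasing. Compare depths \(N\) and \(N+1\): at level \(N\) we have \(u_N^{+,N+1}=H+a_NF_\theta(+\infty)\), which is finite and hence \(<+\infty=u_N^{+,N}\). Composing the monotone maps downward gives \(u_m^{+,N+1}\le u_m^{+,N}\) for all \(m\le N\), and symmetrically the minus fields increase with \(N\). Since \(|F_\theta|\le K\), every \(u_m^{\pm,N}\) with \(m<N\) lies in \([H-a_mK,H+a_mK]\). Also \(u_m^{-,N}\le u_m^{+,N}\), by induction from the terminal values \(-\infty<+\infty\). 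The monotone bounded sequences therefore converge to finite limits \(u_m^\pm\) obeying the same bounds. Continuity of \(F_\theta\) lets us pass to the limit in \(u_m^{\pm,N}=H+a_mF_\theta(u_{m+1}^{\pm,N})\), so the limits satisfy \eqref{eq:cavity}.

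\emph{Uniqueness criterion.} By the setup, uniqueness is equivalent to \(\mu_B^+=\mu_B^-\). The finite-volume marginal of \(\mu_{N,B}^\pm\) on \(\Lambda_m\) is an explicit continuous function of \(u_{m+1}^{\pm,N}\), so \(\mu_B^\pm\) restricted to \(\Lambda_m\) is the same expression evaluated at \(u_{m+1}^\pm\). This function determines \(u_{m+1}\): for example, the conditional law of \(\sigma_x\), \(x\in W_m\), given the spins above it, involves \(F_\theta(u_{m+1})\), and \(F_\theta\) is injective. Hence \(\mu_B^+=\mu_B^-\) holds if and only if \(u_m^+=u_m^-\) for all \(m\).

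It remains to show that equality at one level propagates to all levels. Going down is immediate from \eqref{eq:cavity}: if \(u_{m+1}^+=u_{m+1}^-\), then \(u_m^+=u_m^-\). Going up uses the strict monotonicity, hence injectivity, of \(u\mapsto H+a_mF_\theta(u)\): \(u_m^+=u_m^-\) forces \(u_{m+1}^+=u_{m+1}^-\).

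The main obstacle is justifying that the finite-dimensional marginals depend only on the scalar fields and determine them. This is exactly where spherical symmetry and the uniformity of the boundary are needed. Once that is settled, the equivalence is immediate from the fact that \(\mu_B^\pm\) are extremal envelopes.
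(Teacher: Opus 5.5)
Your proof is correct and follows essentially the same route as the paper. The child-subtree factor $2\cosh(Ks+u)$ gives the recursion. Comparing the terminal value $\pm\infty$ with $H\pm a_NK$ and composing increasing maps gives depth monotonicity and the bounds. Injectivity of $u\mapsto H+a_mF_\theta(u)$ then propagates equality between levels. The only cosmetic difference is that you identify $\mu_B^\pm$ through their explicit marginals on $\Lambda_m$, whereas the paper uses the equivalent description by the root law and the child transition kernels.
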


\begin{proof}
A child subtree of cavity field \(u\) contributes
\(\sum_{t=\pm1}e^{Kst+ut}=2\cosh(Ks+u)\) to its parent spin \(s\).
The half log-ratio for \(s=+1\) and \(s=-1\) is \(F_\theta(u)\),
giving \eqref{eq:cavity}. Comparing the terminal field \(+\infty\)
with \(H+a_NK\), and \(-\infty\) with \(H-a_NK\), proves the depth
monotonicity after composition with the increasing recursion maps.
The bounds follow from \(|F_\theta|\le K\), so the limits exist
and satisfy the recursion.

The limiting root distribution and child transition probabilities are
\begin{equation}\label{eq:markov-reconstruction}
\mu_B^\pm(\sigma_o=s)=\frac{e^{u_0^\pm s}}{2\cosh u_0^\pm},\qquad
\pi_m^\pm(t\mid s)=
\frac{e^{Kst+u_{m+1}^\pm t}}{2\cosh(Ks+u_{m+1}^\pm)}.
\end{equation}
They are the limits of the finite-volume kernels and reconstruct
\(\mu_B^\pm\). Each map \(u\mapsto H+a_mF_\theta(u)\) is injective,
so equality at one level implies equality at the next level and,
by backward recursion, at all preceding levels. Thus equality at one
level identifies the whole field profiles and the two extremal Gibbs
measures. Conversely, equality of those measures implies equality of
their root fields, and injectivity then gives equality at every level.
\end{proof}

More generally, every finite real profile satisfying \eqref{eq:cavity}
defines a splitting (tree-indexed Markov) Gibbs measure through
\eqref{eq:markov-reconstruction}, without the superscripts \(\pm\).
The recursion cancels the child normalizing factors to give the local
DLR conditional probabilities; see also
\cite{Spitzer1975,Zachary1983}. No assertion that all Gibbs measures
are splitting is required for the extremal uniqueness criterion.

Every field obtained after at least one recursion step lies in
\(I_H=[H-qK,H+qK]\). Set
\begin{equation}\label{eq:blockmaps}
g_H(u)=H+F_\theta(u),\qquad Q_H(u)=H+qF_\theta(u),\qquad
\Phi_{\ell,H}=Q_H\circ g_H^{\,\ell}.
\end{equation}
Here \(g_H^0\) is the identity. With
\(\ell_j=n_{j+1}-n_j-1\) and \(v_j=u_{n_j}\), the recursion becomes
\begin{equation}\label{eq:branch-recursion}
v_j=\Phi_{\ell_j,H}(v_{j+1}).
\end{equation}
Thus a gap counts only the intervening one-child levels; the distance
between the branching levels is \(\ell_j+1\).
Figure~\ref{fig:gap-geometry} illustrates the level convention and the
order of the updates in the block map.

\begin{figure}[htbp]
\centering
\begin{tikzpicture}[
  x=1cm,y=1cm,font=\small,
  branch/.style={circle,draw=figblue,fill=figblue,inner sep=0pt,minimum size=5pt},
  unary/.style={circle,draw=black,fill=white,inner sep=0pt,minimum size=4.5pt},
  tree edge/.style={line width=.6pt},
  message/.style={-{Stealth[length=4pt]},line width=.65pt,draw=figblue}
]
\node[font=\small\bfseries] at (0,.65) {(a) A gap of length three};
\node[font=\small\bfseries] at (5.65,.65) {(b) Backward field recursion};

\draw[tree edge,densely dashed] (0,.4)--(0,.08);
\foreach \s in {-1,1} {
  \draw[tree edge] (0,0)--({.85*\s},-.75)
    --({.85*\s},-1.5)--({.85*\s},-2.25)
    --({.85*\s},-3);
  \draw[tree edge] ({.85*\s},-3)--({.85*\s-.28},-3.55);
  \draw[tree edge] ({.85*\s},-3)--({.85*\s+.28},-3.55);
  \draw[tree edge,densely dashed] ({.85*\s-.28},-3.55)
    --({.85*\s-.33},-3.72);
  \draw[tree edge,densely dashed] ({.85*\s+.28},-3.55)
    --({.85*\s+.33},-3.72);
  \foreach \y in {-.75,-1.5,-2.25}
    \node[unary] at ({.85*\s},\y) {};
  \node[branch] at ({.85*\s},-3) {};
}
\node[branch] at (0,0) {};
\node[anchor=east] at (-1.35,0) {$n_j$};
\node[anchor=east] at (-1.35,-.75) {$n_j+1$};
\node[anchor=east] at (-1.35,-1.5) {$n_j+2$};
\node[anchor=east] at (-1.35,-2.25) {$n_j+3$};
\node[anchor=east] at (-1.35,-3) {$n_{j+1}=n_j+4$};
\draw[decorate,decoration={brace,amplitude=4pt},draw=figgray]
  (1.35,-.54)--(1.35,-2.46);
\node[anchor=west,align=left] at (1.6,-1.5)
  {$\ell_j=3$\\[-1pt]\scriptsize one-child levels};

\node (f0) at (5.65,0) {$v_j=u_{n_j}$};
\node (f1) at (5.65,-.75) {$u_{n_j+1}$};
\node (f2) at (5.65,-1.5) {$u_{n_j+2}$};
\node (f3) at (5.65,-2.25) {$u_{n_j+3}$};
\node (f4) at (5.65,-3) {$v_{j+1}=u_{n_{j+1}}$};
\draw[message] (f1.north)--node[right=3pt] {$Q_H$} (f0.south);
\draw[message] (f2.north)--node[right=3pt] {$g_H$} (f1.south);
\draw[message] (f3.north)--node[right=3pt] {$g_H$} (f2.south);
\draw[message] (f4.north)--node[right=3pt] {$g_H$} (f3.south);
\node at (5.65,-3.75) {$v_j=(Q_H\circ g_H^{\,3})(v_{j+1})$};
\node[branch] at (-1.4,-4.25) {};
\node[anchor=west,font=\footnotesize] at (-1.24,-4.25)
  {branching vertex ($q=2$)};
\node[unary] at (3.7,-4.25) {};
\node[anchor=west,font=\footnotesize] at (3.86,-4.25)
  {one-child vertex};
\end{tikzpicture}
\caption{Branching gaps and the order of composition. Panel (a) shows
the descendants of one vertex at level $n_j$, with $q=2$ and
$\ell_j=3$. The gap counts the three intervening one-child levels;
the distance between branching levels is four. Panel (b) follows a
spherically symmetric cavity field from level $n_{j+1}$ toward the
root. Three $g_H$ updates precede $Q_H$, so
$\Phi_{3,H}=Q_H\circ g_H^3$.}
\label{fig:gap-geometry}
\end{figure}
\FloatBarrier

\subsection{Known zero-field and field-monotonicity results}
\begin{theorem}
\label{thm:exact}
At \(H=0\), there is more than one Gibbs measure if and only if
\begin{equation}\label{eq:zero-field-series}
\sum_{N=1}^{\infty}\theta^{-2N}q^{-2b_N}<\infty.
\end{equation}
\end{theorem}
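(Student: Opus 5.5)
At \(H=0\) the recursion \eqref{eq:cavity} is odd, so \(u_m^-=-u_m^+\). By Proposition~\ref{prop:field-recursion}, non-uniqueness is therefore equivalent to \(u_0^+>0\). The plan is to analyse the plus profile directly. Write \(u_m=u_m^+\ge0\). Then \(u_m=a_mF_\theta(u_{m+1})\), and \(u_m\) is the decreasing limit in \(N\) of \(u_m^{+,N}\). The key estimates are two-sided bounds on \(F_\theta\) for \(u\ge 0\):
\[
\theta u-c\,u^3\le F_\theta(u)\le \theta u .
\]
Both follow from \eqref{eq:derivative}, since \(F_\theta'\le\theta\) and \(F_\theta''\) is bounded. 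A cleaner scalar device is to pass to \(x=\tanh u\) or to \(w=u^{-2}\) and linearise the recursion in the variable \(w_m=u_m^{-2}\).

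\emph{Reduction to a scalar recursion.} Put \(\rho_m=a_m\theta\). Heuristically, for small fields
\[
u_m\approx\rho_m u_{m+1}-c\,a_m u_{m+1}^3 .
\]
Substituting \(w_m=u_m^{-2}\) gives the approximate linear inequality
\[
w_m\approx\rho_m^{-2}w_{m+1}+2c\,a_m\rho_m^{-2}.
\]
Iterating from \(m=0\) to \(N-1\) gives
\[
w_0\approx \Pi_N^{-2}w_N+\sum_{k<N}\Pi_{k+1}^{-2}\,O(1),\qquad \Pi_N=\prod_{m<N}\rho_m=\theta^Nq^{b_N}.
\]
Since \(\Pi_N^{-2}=\theta^{-2N}q^{-2b_N}\), the series \eqref{eq:zero-field-series} appears directly. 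To make this rigorous, I would prove two exact inequalities valid for all \(u\ge0\), namely
\[
F_\theta(u)^{-2}\ge\theta^{-2}u^{-2}+c_1 \qquad\text{and}\qquad \frac{1}{(aF_\theta(u))^{2}}\le \frac{1}{a^2\theta^2u^2}+c_2 .
\]
Both are derivable from \(\tanh\) expansions, or equivalently from the identity \(\coth^2\!F=\theta^{-2}\coth^2 u\) rewritten in the variable \(\sinh^{-2}\). The latter is in fact exact: in the variable \(s=\sinh^{-2}\) one has \(\sinh^{-2}F_\theta(u)=\theta^{-2}\sinh^{-2}u+(\theta^{-2}-1)\). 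This exactness is what I would exploit.

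\emph{The two directions.} For the plus boundary condition at depth \(N\), \(s_N=0\). Using the exact relation for \(F_\theta\), together with the fact that multiplication by \(a\) changes \(\sinh^{-2}\) in a controlled way (\(\sinh^{-2}(au)\) lies between \(a^{-2}\sinh^{-2}u-C\) and \(a^{-2}\sinh^{-2}u\)), I obtain
\[
s_0^{(N)}\asymp\sum_{k\le N}\Pi_k^{-2}.
\]
The comparison constants depend only on \(q\) and \(\theta\). If the series diverges, then \(s_0^{(N)}\to\infty\), so \(u_0^+=0\) and uniqueness holds. If it converges, \(s_0\) stays bounded, so \(u_0^+>0\) and there is more than one Gibbs measure. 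The boundary term \(\Pi_N^{-2}s_N\) vanishes because \(s_N=0\).

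\emph{Main obstacle.} The delicate step is the lower bound on \(s_0\) in the divergent case. There one must make sure that the additive constants accumulate as \(\sum\Pi_k^{-2}\) with a uniformly positive coefficient, including at the branching levels where the factor \(q\) acts. I would handle this by grouping each branching level with the following \(F_\theta\)-step, so that every step of the form \(u\mapsto F_\theta(u)\) contributes the exact additive term \(\theta^{-2}-1>0\). The multiplication by \(q\) then only contributes a one-sided error of the favourable sign, using \(\sinh(qu)\ge q\sinh u\) for one direction. For the other direction I would use \(\sinh(qu)\le q\sinh u\,\cosh^{q-1}u\) together with \(u\le qK\).
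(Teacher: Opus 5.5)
\textbf{There is a genuine gap in the uniqueness direction.}

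Your coexistence half is correct. The identity $\sinh^{-2}F_\theta(u)=\theta^{-2}\sinh^{-2}u+\theta^{-2}-1$ is exact. Together with $\sinh(qv)\ge q\sinh v$ it gives $s_m\le(a_m\theta)^{-2}s_{m+1}+a_m^{-2}(\theta^{-2}-1)$, hence $s_0^{(N)}\le(1-\theta^2)\sum_{\ell=1}^N\Pi_\ell^{-2}$.

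The gap is the lower bound needed when the series diverges. In the variable $s=\sinh^{-2}u$ the branching step is not exact: $\sinh^{-2}(qv)=q^{-2}\sinh^{-2}v-\phi(v)$, with $\phi\ge0$ and $\phi(v)\to(1-q^{-2})/3$ as $v\downarrow0$. That limit is precisely the regime of vanishing fields you are trying to reach. So the error from multiplication by $q$ has the \emph{unfavourable} sign for this direction. It is also not dominated by the gain $\theta^{-2}-1$ of an $F_\theta$-step:
\begin{itemize}
\item a branching level contributes the additive term $q^{-2}(\theta^{-2}-1)-\phi(v)$;
\item pairing it with the next $F_\theta$-step instead gives $(\theta^{-2}-1)-\theta^{-2}\phi(v)$;
\item both are negative for small $v$ once $\theta$ is close to one, and consecutive branching levels leave no one-child step to absorb the deficit.
\end{itemize}
So your level-by-level accounting produces no uniformly positive coefficient, even though the conclusion $s_0^{(N)}\to\infty$ is true.

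Your fallback does not repair this. The inequality $\sinh(qu)\le q\sinh u\cosh^{q-1}u$ is false for $q\ge3$, since $\sinh 3u=3\sinh u\cosh^2u+\sinh^3u$. Any correct bound of that shape, such as $\sinh(qv)\le q e^{(q-1)v}\sinh v$ with $v\le K$, costs a multiplicative factor $\kappa=e^{-2(q-1)K}<1$ at every branching level. Iterating then yields only $s_0^{(N)}\ge(1-\theta^2)\sum_{\ell\le N}\kappa^{b_\ell}\Pi_\ell^{-2}$. This series converges, for instance, for the regular tree at $q\theta=1$, where $\Pi_\ell=1$ and uniqueness nevertheless holds.

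The repair is the pair of inequalities you wrote down first and then abandoned. In the variable $r=u^{-2}$ the branching step is exact, $r\mapsto a^{-2}r$. The whole error then sits in the $F_\theta$-step: $F_\theta(t)^{-2}=\theta^{-2}t^{-2}+\Psi_\theta(t)$. Here $\Psi_\theta$ is positive because $F_\theta(t)<\theta t$, and it extends continuously to $[0,\infty]$ with value $2(1-\theta^2)/(3\theta^2)$ at $0$ and $K^{-2}$ at $\infty$. Hence $c_\theta\le\Psi_\theta\le C_\theta$ for positive constants. Iterating from $r_N=0$ gives the exact representation $r_0^{(N)}=\theta^2\sum_{\ell=1}^N\Pi_\ell^{-2}\Psi_\theta(u_\ell^{+,N})$, which yields both directions at once; this is the paper's proof. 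The $\sinh^{-2}$ coordinates only move the non-exactness from the $F_\theta$-step to the branching step, where it cannot be controlled additively with the right sign.
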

This is \cite[Corollary~2.4]{PemantlePeres2010} specialized to
\(|W_N|=q^{b_N}\). Appendix~\ref{sec:zero-field-appendix} gives a
short self-contained proof. Away from equality, coexistence occurs
when \(\theta q^{\underline d}>1\) and uniqueness when
\(\theta q^{\underline d}<1\). At equality the full counting sequence
enters \eqref{eq:zero-field-series}.

For fixed temperature, uniqueness is monotone in the magnitude of a
homogeneous field on every locally finite graph
\cite[Proposition~2]{SchonmannTanaka1998}: uniqueness at \(H_1\)
implies uniqueness whenever \(|H|>|H_1|\). In particular, zero-field
uniqueness implies uniqueness at every homogeneous field. A more
general boundary-influence comparison allowing arbitrary fields is
proved in \cite{DingSongSun2023}. We use the homogeneous statement
as an established result.

Spin flip and this monotonicity show that
\(\mathcal C_\theta=\{H\in\mathbb R:\mu_{H/\beta}^-\ne
\mu_{H/\beta}^+\}\) is a symmetric interval, possibly empty or
\(\{0\}\). Define
\begin{equation}\label{eq:Hc}
H_c(\theta;\alpha,q)=\sup\bigl((\mathcal C_\theta\cap[0,\infty))
\cup\{0\}\bigr).
\end{equation}
It is finite by Proposition~\ref{thm:high-field-uniqueness} below.
This definition does not settle coexistence at the endpoints.

\section{Unbounded gaps and the equal-density comparison}
\label{sec:unbounded}

The first result uses only the existence of arbitrarily long gaps.
Their frequency and the asymptotic density of branching play no role.

\begin{theorem}
\label{thm:unbounded-gaps}
Suppose that the non-branching gap lengths are unbounded:
\[
\sup_{j\ge1}\ell_j=\infty.
\]
Then, for every finite inverse temperature \(\beta>0\) and every
homogeneous field \(H=\beta B\ne0\), the Ising model on
\(T_{\alpha,q}\) has a unique Gibbs measure. In particular,
\(H_c(\theta;\alpha,q)=0\), irrespective of whether the zero-field
model is unique.
\end{theorem}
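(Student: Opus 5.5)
By the spin-flip symmetry $\sigma\mapsto-\sigma$, $H\mapsto-H$, it suffices to treat $H>0$. By Proposition~\ref{prop:field-recursion}, I must show $u_0^-=u_0^+$ for the limiting profiles. The plan has two steps. First, a sufficiently long gap forces the minus field to be positive at some deep level. Second, once the fields are positive, a multiplicative (logarithmic) contraction of the ratio $u^-/u^+$ acts at every level, and these contractions accumulate all the way to the root.

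\emph{Step 1: long gaps make the minus field positive.} The map $g_H(u)=H+F_\theta(u)$ is a contraction with Lipschitz constant at most $\theta<1$, by \eqref{eq:derivative}. Hence it has a unique fixed point $u_*$. Since $g_H(0)=H>0$ and $g_H$ is increasing, $u_*>0$. Every limiting field lies in $I_H$, an interval of length $2qK$, after one recursion step. Therefore, if $v_{j+1}^-=u^-_{n_{j+1}}\in I_H$, then
\[
\bigl|u^-_{n_j+1}-u_*\bigr|=\bigl|g_H^{\ell_j}(v_{j+1}^-)-u_*\bigr|\le\theta^{\ell_j}\,2qK.
\]
Choose $\ell_0$ with $\theta^{\ell_0}2qK<u_*$. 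Because $\sup_j\ell_j=\infty$, there are infinitely many $j$ with $\ell_j\ge\ell_0$. For each such $j$ we get $u^-_{n_j+1}>0$, and so $u^-_m>0$ at arbitrarily deep levels $M=n_j+1$. Positivity then propagates upward. Indeed, if $u^-_{m+1}>0$, then $u^-_m=H+a_mF_\theta(u^-_{m+1})>H>0$. Hence $u^-_m>0$ for all $m\le M$, and by monotonicity also $u^+_m\ge u^-_m>0$.

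\emph{Step 2: contraction of the ratio.} On $[0,\infty)$, $F_\theta$ is increasing, concave, and satisfies $F_\theta(0)=0$. Hence $F_\theta(\lambda u)\ge\lambda F_\theta(u)$ for $\lambda\in[0,1]$ and $u\ge0$. Put $r_m=u^-_m/u^+_m\in(0,1]$ for $m\le M$. Then
\[
u^-_m=H+a_mF_\theta(r_{m+1}u^+_{m+1})\ge H+r_{m+1}a_mF_\theta(u^+_{m+1})=r_{m+1}u^+_m+(1-r_{m+1})H .
\]
Dividing by $u^+_m\le H+qK$ and writing $\kappa=H/(H+qK)\in(0,1)$, this gives
\[
1-r_m\le(1-\kappa)(1-r_{m+1}),\qquad 0\le m<M .
\]
Iterating from the trivial bound $1-r_M\le1$ yields $1-r_0\le(1-\kappa)^M$. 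The constant $\kappa$ does not depend on $M$, and $M$ can be taken arbitrarily large. Therefore $r_0=1$, that is, $u_0^-=u_0^+$. Proposition~\ref{prop:field-recursion} then gives uniqueness, and hence $H_c=0$. No zero-field information enters the argument.

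\emph{Main obstacle.} The delicate point is that a direct additive contraction fails at branching levels. There $qF'_\theta$ may exceed $1$, so the difference $u^+-u^-$ need not shrink. This is why I pass to the ratio and use concavity, together with the strictly positive additive term $H$. The cost is that the ratio argument needs positivity of $u^-$ at every level above the deep level, and this is exactly what Step 1 supplies via a long gap. One must also check the following. First, $r_{m+1}$ is used at the same level at which positivity holds. Second, the comparison $F_\theta(\lambda u)\ge\lambda F_\theta(u)$ is applied only to nonnegative arguments.
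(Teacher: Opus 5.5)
Your proof is correct and follows essentially the paper's route: a sufficiently long gap forces the limiting minus field to be positive, and then concavity of $F_\theta$ on $[0,\infty)$ together with the additive field $H$ gives a multiplicative contraction with exactly the paper's constant $c_H=qK/(H+qK)=1-\kappa$. The differences are cosmetic: you contract $1-u_m^-/u_m^+$ via the secant inequality $F_\theta(\lambda u)\ge\lambda F_\theta(u)$ instead of $\log u_m^+-\log u_m^-$ via $uF_\theta'(u)\le F_\theta(u)$, so you need only $u_M^->0$ at the deep level rather than the lower bound $H$; and you obtain the positive deep field from the limiting field in $I_H$ rather than from the iterate $g_H^{\ell}(-\infty)$.
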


\begin{proof}
By spin-flip symmetry it suffices to consider \(H>0\). Recall that
\(K=\arctanh\theta\) and that the finite limiting plus and minus fields
satisfy
\[
u_m^\pm=H+a_mF_\theta(u_{m+1}^\pm),
\qquad
H-qK\le u_m^-\le u_m^+\le H+qK.
\]
Since \(0<F_\theta'(u)\le\theta<1\), the map
\(g_H(u)=H+F_\theta(u)\) is a contraction of \(\mathbb R\).
Its unique fixed point \(r_H\) is positive: the function
\(u-F_\theta(u)\) is strictly increasing and vanishes at zero, whereas
\(r_H-F_\theta(r_H)=H>0\). After the first iterate from \(-\infty\),
the orbit is finite, and the contraction theorem gives
\[
g_H^\ell(-\infty)\longrightarrow r_H>0.
\]
Thus there is an integer \(L\ge1\) such that
\(g_H^\ell(-\infty)>0\) for every \(\ell\ge L\).

Whenever \(\ell_j\ge L\), monotonicity and the block recursion imply
\[
u_{n_j}^-
=Q_H\!\left(g_H^{\ell_j}(u_{n_{j+1}}^-)\right)
\ge Q_H\!\left(g_H^{\ell_j}(-\infty)\right)>H.
\]
The backward recursion preserves the lower bound \(H\), since
\(H+aF_\theta(u)\ge H\) for \(u\ge H>0\) and
\(a\in\{1,q\}\). Unboundedness of the finite gap lengths supplies
indices \(j\) with \(\ell_j\ge L\) arbitrarily far from the root.
Consequently, for every level \(m\),
\begin{equation}
\label{eq:positive-fields-unbounded-gaps}
H\le u_m^-\le u_m^+\le H+qK.
\end{equation}

On this positive interval, use logarithmic coordinates. For
\(a\in\{1,q\}\), define
\[
f_a(x):=\log\!\left(H+aF_\theta(e^x)\right).
\]
The decrease of \(F_\theta'\) on \((0,\infty)\), together with
\(F_\theta(0)=0\), gives
\[
uF_\theta'(u)\le\int_0^u F_\theta'(t)\,dt=F_\theta(u)
\qquad (u>0).
\]
Hence
\[
0<f_a'(x)
=\frac{ae^xF_\theta'(e^x)}{H+aF_\theta(e^x)}
\le\frac{aF_\theta(e^x)}{H+aF_\theta(e^x)}
\le\frac{qK}{H+qK}
=:c_H<1.
\]
Applying this bound to the logarithms of the limiting fields and
iterating from any level \(N>m\) back to \(m\), we obtain
\[
\begin{aligned}
0\le\log u_m^+-\log u_m^-
&\le c_H^{N-m}
       \bigl(\log u_N^+-\log u_N^-\bigr)\\
&\le c_H^{N-m}\log\frac{H+qK}{H},
\end{aligned}
\]
where the last inequality follows from
\eqref{eq:positive-fields-unbounded-gaps}. Letting \(N\to\infty\)
yields \(u_m^+=u_m^-\) for every \(m\). The root laws and transition
kernels therefore coincide, so \(\mu_B^+=\mu_B^-\), which is equivalent
to uniqueness among all DLR Gibbs measures. The assertion about
\(H_c\) follows from its definition.
\end{proof}

\begin{corollary}
\label{cor:bounded-gaps-necessary}
If the model has more than one Gibbs measure at some homogeneous field
\(H\ne0\), then \(\sup_{j\ge1}\ell_j<\infty\).
\end{corollary}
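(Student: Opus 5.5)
This is the contrapositive of Theorem~\ref{thm:unbounded-gaps}, so the plan is a one-line reduction. Suppose the model on \(T_{\alpha,q}\) admits more than one Gibbs measure at some finite inverse temperature \(\beta>0\) and some homogeneous field \(H=\beta B\ne0\). We argue by contradiction: if \(\sup_{j\ge1}\ell_j=\infty\), then Theorem~\ref{thm:unbounded-gaps} applies at this same \(\beta\) and \(H\). It yields a unique Gibbs measure, contradicting the assumption. Hence \(\sup_j\ell_j<\infty\).

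The only point to check is that the hypotheses of Theorem~\ref{thm:unbounded-gaps} match. That theorem requires only that the gaps be unbounded, together with finite \(\beta\) and \(H\ne0\); there is no condition on densities, on zero-field behaviour, or on the sign of \(H\), since spin-flip symmetry is handled inside its proof. Also, ``more than one Gibbs measure'' here means non-uniqueness among all DLR measures. By the remark after the specification, this is equivalent to \(\mu_B^-\ne\mu_B^+\), which is exactly the notion negated by the theorem.

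I expect no real obstacle. The only care needed is that the inverse temperature in the corollary is finite, so that \(\theta<1\) and the theorem's contraction argument is available.
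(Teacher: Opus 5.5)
Your proposal is correct and takes the same approach as the paper, whose proof is exactly the one-line contrapositive of Theorem~\ref{thm:unbounded-gaps}. The hypotheses you check (finite \(\beta\), either sign of \(H\ne0\), and non-uniqueness meaning \(\mu_B^-\ne\mu_B^+\)) agree with the statement of that theorem.
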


\begin{proof}
This is the contrapositive of Theorem~\ref{thm:unbounded-gaps}.
\end{proof}

\begin{theorem}
\label{thm:equal-density}
For every \(q\ge2\), define
\[
t_k=2^{k^2},\qquad D_k=\{t_k,\ldots,t_k+k-1\},\qquad
\alpha=\mathbb Z_{\ge0}\setminus\bigcup_{k\ge1}D_k.
\]
The trees \(T_{\alpha,q}\) and the regular rooted \(q\)-ary tree
both have lower and upper branching densities equal to one and
branching number \(q\). For every \(\theta\in(1/q,1)\), their
coexistence sets are respectively
\[
\mathcal C_\theta(T_{\alpha,q})=\{0\},\qquad
\mathcal C_\theta(T_{\mathrm{reg},q})
=[-H_c^{\mathrm{reg}}(\theta,q),H_c^{\mathrm{reg}}(\theta,q)],
\]
where \(H_c^{\mathrm{reg}}(\theta,q)>0\) is given in
Proposition~\ref{prop:regular-field}.
\end{theorem}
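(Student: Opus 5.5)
The proof splits into three parts: the density and branching-number claims, the coexistence set of \(T_{\alpha,q}\), and that of the regular tree.

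\emph{Density and branching number.} The deleted levels below \(N\) form the union of those \(D_k\) with \(t_k<N\). If \(k^*\) is the largest index with \(t_{k^*}<N\), their number is at most \(\sum_{k\le k^*}k\le (k^*)^2\). Since \(2^{(k^*)^2}<N\), we have \((k^*)^2<\log_2N\). Hence
\[
N-\log_2 N\le b_N\le N,
\]
so \(b_N/N\to1\) and \(\underline d=\overline d=1\). Formula \eqref{eq:branching-number} then gives \(\br(T_{\alpha,q})=q\). For the regular tree \(b_N=N\) trivially, so the same conclusions hold.

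\emph{The coexistence set of \(T_{\alpha,q}\).} Every level of \(D_k\) is non-branching. For \(k\) large, \(D_k\) and \(D_{k+1}\) are separated by branching levels, so \(D_k\) is exactly one gap and \(\ell_j=k\) for the corresponding index \(j\). The gap lengths are therefore unbounded. Theorem~\ref{thm:unbounded-gaps} gives uniqueness at every \(H\ne0\), so \(\mathcal C_\theta(T_{\alpha,q})\subseteq\{0\}\).

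For \(H=0\) we use Theorem~\ref{thm:exact}. Since \(b_N\ge N-\log_2N\),
\[
\theta^{-2N}q^{-2b_N}\le (q\theta)^{-2N}q^{2\log_2 N}=(q\theta)^{-2N}N^{2\log_2 q}.
\]
Because \(q\theta>1\), the right-hand side is summable, so the series \eqref{eq:zero-field-series} converges. Hence there is coexistence at zero field and \(\mathcal C_\theta(T_{\alpha,q})=\{0\}\).

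\emph{The regular tree.} Here I would invoke Proposition~\ref{prop:regular-field}, which presumably supplies \(H_c^{\mathrm{reg}}>0\) and coexistence exactly on the closed interval \([-H_c^{\mathrm{reg}},H_c^{\mathrm{reg}}]\) when \(q\theta>1\). If it only gives the threshold, I would argue directly. The regular tree is the constant-gap-\(0\) case, with return map \(Q_H\). By Proposition~\ref{prop:field-recursion}, coexistence is equivalent to the least and greatest fixed points of \(Q_H\) differing, since the extremal fields are the limits from \(\pm\infty\).

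\begin{itemize}
\item For \(H\) small and positive, \(Q_H\) has three fixed points. This follows by continuity from \(H=0\), where \(Q_0'(0)=q\theta>1\) makes the origin repelling and forces two further fixed points \(\pm u^*\).
\item At the endpoint \(H=H_c^{\mathrm{reg}}\), the two negative fixed points merge into a tangential fixed point. That point is still distinct from the positive fixed point, so the endpoint belongs to \(\mathcal C_\theta\) and the interval is closed.
\item For \(H>H_c^{\mathrm{reg}}\), the fixed point is unique. Schonmann–Tanaka field monotonicity then gives an interval, and spin-flip symmetry makes it symmetric.
\end{itemize}

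The step I expect to be the main obstacle is confirming that the endpoint is included, that is, that the fold point is a genuine distinct least fixed point rather than a merger with the greatest one. This should follow from the negative Schwarzian property, which allows at most three fixed points. That property is presumably what Proposition~\ref{prop:regular-field} encodes, so I would cite it.
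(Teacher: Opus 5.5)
Your proposal is correct and follows the paper's proof essentially step for step. The logarithmic count of deleted levels gives both densities equal to one and $\br=q$, Theorem~\ref{thm:unbounded-gaps} removes all nonzero fields, and the bound $(q\theta)^{-2N}N^{2\log_2 q}$ gives zero-field coexistence via Theorem~\ref{thm:exact}. For the regular tree your fallback sketch is not needed: Proposition~\ref{prop:regular-field} already gives exactly two distinct fixed points of $Q_H$ at $|H|=H_c^{\mathrm{reg}}$, and identifying the extremal fields with the least and greatest fixed points (via Theorem~\ref{thm:periodic-return}, or your Proposition~\ref{prop:field-recursion} argument) yields the closed interval, exactly as in the paper.
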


\begin{proof}
The blocks \(D_k\) are pairwise disjoint and separated by branching
levels: \(t_k+k<t_{k+1}\) for every \(k\ge1\).  In particular,
\(t_k-1\) and \(t_k+k\) are consecutive elements of \(\alpha\), with
exactly \(k\) non-branching levels between them.  Hence the gap lengths
\(\ell_j\) are unbounded.

Let \(r_N:=N-b_N\) be the number of deleted levels below \(N\).  For
\(N\ge2\), put \(m_N:=\lfloor\sqrt{\log_2 N}\rfloor\).  A block
that meets \(\{0,\ldots,N-1\}\) must have \(t_k<N\), and therefore
\(k\le m_N\).  Thus
\[
0\le r_N\le\sum_{k=1}^{m_N}k
=\frac{m_N(m_N+1)}2
=O(\log(N+2)).
\]
It follows that
\[
b_N=N-O(\log(N+2)),\qquad
\underline d(\alpha)=\overline d(\alpha)=1.
\]
Moreover,
\[
\theta^{-2N}q^{-2b_N}
=(q\theta)^{-2N}q^{2r_N},
\]
where \(q^{2r_N}\) grows at most polynomially in \(N\).  Since
\(q\theta>1\), the zero-field series converges, and
Theorem~\ref{thm:exact} gives coexistence at \(H=0\).  On the other
hand, Theorem~\ref{thm:unbounded-gaps} gives uniqueness at every
\(H\ne0\).  Consequently,
\[
H_c(\theta;\alpha,q)=0,
\]
even though the coexistence set is nonempty.

For comparison, let \(\alpha_{\mathrm{reg}}=\mathbb Z_{\ge0}\).
The rooted \(q\)-ary tree \(T_{\alpha_{\mathrm{reg}},q}\) has the
same lower and upper branching densities, both equal to one, and the
same branching number \(q\).  By
Theorem~\ref{thm:periodic-return} and Proposition~\ref{prop:regular-field},
the regular-tree model has coexistence precisely for
\[
|H|\le H_c^{\mathrm{reg}}(\theta,q),
\qquad H_c^{\mathrm{reg}}(\theta,q)>0.
\]
The endpoints are included because the regular-tree equation has two
distinct fixed points there.  Hence
\[
H_c(\theta;\alpha_{\mathrm{reg}},q)
=H_c^{\mathrm{reg}}(\theta,q)>0.
\]
This pair of trees proves that the homogeneous-field phase diagram and
the coexistence supremum are not determined by the lower and upper
branching densities.  It also shows that zero-field coexistence need
not persist at any nonzero homogeneous field.
\end{proof}

\begin{remark}
The suppressed levels have density zero, but every ray meets their
unbounded nonbranching blocks. Thus the comparison preserves growth
invariants while changing the gap geometry and amenability. It is
more specific than the previously known dependence on geometry
illustrated by finite-leaf decorations \cite{SchonmannTanaka1998}.
\end{remark}

Figure~\ref{fig:equal-density} displays the branching patterns and
the different coexistence sets in Theorem~\ref{thm:equal-density}.

\begin{figure}[htbp]
\centering
\begin{tikzpicture}[x=1cm,y=1cm,font=\small,
  branch/.style={circle,inner sep=0pt,minimum size=4.5pt},
  every node/.style={outer sep=1pt}]
\node[anchor=west,font=\small\bfseries] at (0,3.65)
  {(a) Identical growth invariants, different gap geometry};
\node[anchor=west,font=\scriptsize,text=figgray] at (0,3.29)
  {Selected levels; the horizontal axis is compressed and is not to scale.};
\node[anchor=west] at (0,2.73) {Regular tree};
\node[anchor=west] at (0,1.98) {$T_{\alpha,q}$};
\foreach \yy in {2.73,1.98} {
  \draw[figgray] (2.40,\yy)--(4.05,\yy);
  \draw[figgray,densely dotted] (4.05,\yy)--(5.25,\yy);
  \draw[figgray] (5.25,\yy)--(6.90,\yy);
  \draw[figgray,densely dotted] (6.90,\yy)--(8.20,\yy);
  \draw[figgray] (8.20,\yy)--(10.40,\yy);
  \draw[figgray,densely dotted] (10.40,\yy)--(11.20,\yy);
}
\foreach \xx/\lev in {2.40/0,2.95/1,3.50/2,4.05/3,
  5.25/15,5.80/16,6.35/17,6.90/18,
  8.20/511,8.75/512,9.30/513,9.85/514,10.40/515} {
  \node[branch,fill=figblue,draw=figblue] at (\xx,2.73) {};
  \node[anchor=north,font=\scriptsize] at (\xx,1.69) {$\lev$};
}
\foreach \xx in {2.40,2.95,4.05,5.25,6.90,8.20,10.40}
  \node[branch,fill=figorange,draw=figorange] at (\xx,1.98) {};
\foreach \xx in {3.50,5.80,6.35,8.75,9.30,9.85}
  \node[branch,fill=white,draw=figorange,line width=.7pt] at (\xx,1.98) {};
\node[anchor=west,font=\scriptsize,text=figgray] at (11.45,1.98)
  {$a_m\in\{1,q\}$};
\node[anchor=west,font=\scriptsize] at (0,1.48) {Level $m$};
\draw[figorange] (3.27,1.18)--(3.27,1.07)--(3.73,1.07)--(3.73,1.18);
\draw[figorange] (5.57,1.18)--(5.57,1.07)--(6.58,1.07)--(6.58,1.18);
\draw[figorange] (8.52,1.18)--(8.52,1.07)--(10.08,1.07)--(10.08,1.18);
\node[anchor=north,font=\scriptsize,text=figorange] at (3.50,1.02) {$D_1$};
\node[anchor=north,font=\scriptsize,text=figorange] at (6.075,1.02) {$D_2$};
\node[anchor=north,font=\scriptsize,text=figorange] at (9.30,1.02) {$D_3$};
\node[branch,fill=black,draw=black] at (.10,.47) {};
\node[anchor=west,font=\scriptsize] at (.30,.47) {$q$ offspring};
\node[branch,fill=white,draw=black] at (2.17,.47) {};
\node[anchor=west,font=\scriptsize] at (2.37,.47) {one offspring};
\node[anchor=west,font=\small] at (5.20,.47)
  {Both trees: $\underline d=\overline d=1$, $\operatorname{br}=q$};
\node[anchor=west,font=\small\bfseries] at (0,-.17)
  {(b) Coexistence sets at fixed $q\theta>1$};
\node[anchor=west] at (0,-.89) {Regular tree};
\node[anchor=west] at (0,-1.72) {$T_{\alpha,q}$};
\foreach \yy in {-.89,-1.72} {
  \draw[->,figgray] (4.30,\yy)--(12.25,\yy) node[right,text=black] {$H$};
}
\draw[figblue,line width=2pt] (5.25,-.89)--(10.65,-.89);
\fill[figblue] (5.25,-.89) circle (2.6pt);
\fill[figblue] (10.65,-.89) circle (2.6pt);
\draw[figgray] (7.95,-.98)--(7.95,-.80);
\node[anchor=north,font=\scriptsize] at (5.25,-1.02) {$-H_c^{\mathrm{reg}}$};
\node[anchor=north,font=\scriptsize] at (7.95,-1.02) {$0$};
\node[anchor=north,font=\scriptsize] at (10.65,-1.02) {$H_c^{\mathrm{reg}}$};
\fill[figorange] (7.95,-1.72) circle (2.6pt);
\node[anchor=north,font=\scriptsize] at (7.95,-1.86) {$0$};
\node[anchor=west,font=\scriptsize,text=figgray] at (0,-2.31)
  {Colored segments and solid points denote coexistence; all other fields give uniqueness.};
\end{tikzpicture}
\caption{The equal-density comparison of Theorem~\ref{thm:equal-density}.
Deleting the branching levels
$D_k=\{2^{k^2},\ldots,2^{k^2}+k-1\}$ creates nonbranching gaps of
unbounded length while preserving both branching densities and the
branching number. Panel (a) displays selected levels from the exact
construction. For $q\theta>1$, panel (b) shows the closed regular-tree
coexistence interval and the singleton coexistence set of the modified
tree. In particular, both regular-tree endpoints are included.}
\label{fig:equal-density}
\end{figure}
\FloatBarrier

\section{Quantitative boundary influence and its possible slow decay}
\label{sec:boundary-influence}

For fixed \(\theta\) and a homogeneous field \(H\), write
\(u_m^{\pm,N}(H)\) when the field dependence must be displayed, and set
\begin{equation}\label{eq:boundary-influence}
\Delta_N(H):=
\mu_{N,H/\beta}^{+}(\sigma_o)-\mu_{N,H/\beta}^{-}(\sigma_o)
=\tanh u_0^{+,N}(H)-\tanh u_0^{-,N}(H),\qquad N\ge1.
\end{equation}
Here \(\mu(\sigma_o)\) denotes the expectation of the root spin.
Attractiveness implies \(\Delta_N(H)\ge0\), and spin flip gives
\(\Delta_N(-H)=\Delta_N(H)\).

\begin{theorem}
\label{thm:quantitative-gap}
Let \(H\ne0\), and put
\[
h=|H|,\qquad M_h=h+qK,\qquad c_h=\frac{qK}{h+qK}<1,
\]
\begin{equation}\label{eq:sufficient-corridor-length}
L_h=1+\max\left\{0,
\left\lceil\frac{\log(4K/h)}{|\log\theta|}\right\rceil\right\}.
\end{equation}
Whenever the following set is nonempty, define
\begin{equation}\label{eq:last-sufficient-corridor}
R_h(N)=\max\bigl\{b\in\mathbb Z_{\ge0}:
b+L_h+1\le N,\quad a_{b+1}=\cdots=a_{b+L_h}=1\bigr\}.
\end{equation}
Then
\begin{equation}\label{eq:quantitative-gap-bound}
0\le\Delta_N(H)
\le u_0^{+,N}(H)-u_0^{-,N}(H)
\le M_h\log\frac{M_h}{h}\,c_h^{R_h(N)}.
\end{equation}
Moreover, if \(\ell_j\ge L_h\) and \(N\ge n_{j+1}\),
the right-hand side can be replaced by
\(M_h\log(M_h/h)c_h^{n_j}\).
\end{theorem}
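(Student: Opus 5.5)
The plan is to run the proof of Theorem~\ref{thm:unbounded-gaps} quantitatively, in finite volume. By spin-flip symmetry (\(\Delta_N(-H)=\Delta_N(H)\)) I take \(H=h>0\). The first inequality in \eqref{eq:quantitative-gap-bound} is attractiveness. The second holds because \(\tanh\) is \(1\)-Lipschitz and \(u_0^{+,N}\ge u_0^{-,N}\); the latter follows from the terminal values \(\pm\infty\) and monotonicity of the recursion maps. So the whole task is bounding \(u_0^{+,N}-u_0^{-,N}\).

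\emph{Step 1: an explicit corridor length.} I will show that \(g_h^{L_h}(-\infty)\ge h/2>0\). Let \(r_h\) be the fixed point of \(g_h\). Since \(F_\theta(r_h)>0\), we have \(r_h\ge h\). We also have \(g_h(-\infty)=h-K\), and
\[
0\le r_h-(h-K)=F_\theta(r_h)+K\le 2K.
\]
Because \(g_h\) is a \(\theta\)-contraction,
\[
g_h^{\ell}(-\infty)\ge r_h-2K\theta^{\ell-1}\ge h-2K\theta^{\ell-1}.
\]
The choice \eqref{eq:sufficient-corridor-length} gives \(\theta^{L_h-1}\le h/(4K)\) when \(4K>h\); the case \(4K\le h\) is trivial. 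Hence \(g_h^{L_h}(-\infty)\ge h/2\).

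\emph{Step 2: positivity of the minus fields above the corridor.} Let \(b=R_h(N)\). The condition \(b+L_h+1\le N\) means that the levels \(b+1,\dots,b+L_h\) all lie strictly above the boundary and all have one child. Therefore
\[
u_{b+1}^{-,N}=g_h^{L_h}\bigl(u_{b+L_h+1}^{-,N}\bigr)\ge g_h^{L_h}(-\infty)>0,
\]
using monotonicity and \(u_{b+L_h+1}^{-,N}\ge-\infty\). Consequently \(u_b^{-,N}=h+a_bF_\theta(u_{b+1}^{-,N})\ge h\). As in the infinite-volume proof, the backward recursion preserves the bound \(u\ge h\) for both \(a\in\{1,q\}\), so \(u_m^{-,N}\ge h\) for all \(m\le b\). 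At every level \(m<N\) the plus fields satisfy \(u_m^{+,N}\le h+qK=M_h\).

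\emph{Step 3: logarithmic contraction.} Using the maps \(f_a\) from the proof of Theorem~\ref{thm:unbounded-gaps}, the derivative bound \(f_a'\le c_h\) applies on the region where both fields lie in \([h,M_h]\). Iterating from level \(b\) down to \(0\) gives
\[
\log u_0^{+,N}-\log u_0^{-,N}\le c_h^{b}\log\frac{M_h}{h}.
\]
Since both root fields lie in \([h,M_h]\), the mean value theorem gives
\[
u_0^{+,N}-u_0^{-,N}\le M_h\bigl(\log u_0^{+,N}-\log u_0^{-,N}\bigr),
\]
and this yields \eqref{eq:quantitative-gap-bound}.

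\emph{Step 4: the "moreover" part.} If \(\ell_j\ge L_h\), then \(b=n_j\) is admissible in \eqref{eq:last-sufficient-corridor}. Indeed, the levels \(n_j+1,\dots,n_j+L_h\) are one-child levels, and
\[
n_j+L_h+1\le n_j+\ell_j+1=n_{j+1}\le N.
\]
Thus \(R_h(N)\ge n_j\), and since \(c_h<1\) we get \(c_h^{R_h(N)}\le c_h^{n_j}\).

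The main obstacle is the finite-volume bookkeeping in Step 2. One must check that the corridor lies entirely inside \(\Lambda_{N-1}\), so that the boundary value \(-\infty\) enters only as a lower bound through \(g_h^{L_h}\). One must also check that the contraction in Step 3 is applied only at levels \(m\le b\), where both fields are confined to \([h,M_h]\). The explicit estimate \(r_h-g_h(-\infty)\le 2K\) is what calibrates the constant \(4K/h\) in \(L_h\).
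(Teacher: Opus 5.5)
Your proposal is correct and follows the paper's proof essentially step for step. It uses the same calibration $g_h^{L_h}(-\infty)\ge r_h-2K\theta^{L_h-1}\ge h/2$, the same propagation of the lower bound $h$ below the corridor at $b=R_h(N)$, the same logarithmic contraction by $c_h$ followed by the $M_h$-Lipschitz bound for $\exp$ on $[\log h,\log M_h]$, and the same check that $b=n_j$ is admissible for the final assertion. The one implicit input, $r_h>0$ (needed for $F_\theta(r_h)>0$), is supplied by the proof of Theorem~\ref{thm:unbounded-gaps}, which shows that $u-F_\theta(u)$ is strictly increasing and vanishes at $0$.
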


\begin{proof}
It suffices to consider \(H=h>0\), because
\(u_m^{+,N}(-h)=-u_m^{-,N}(h)\).
The fixed point \(r_h\) of \(g_h\) satisfies
\(h\le r_h<h+K\), while \(g_h(-\infty)=h-K\).
Using the Lipschitz constant \(\theta\) of \(g_h\) after this first
iterate, we obtain, for every integer \(L\ge1\),
\[
g_h^L(-\infty)\ge r_h-2K\theta^{L-1}
\ge h-2K\theta^{L-1}.
\]
The choice \eqref{eq:sufficient-corridor-length} therefore ensures
\(g_h^{L_h}(-\infty)\ge h/2>0\).

Suppose that \(b\) belongs to the set in
\eqref{eq:last-sufficient-corridor}. The recursion through its
\(L_h\) one-child levels gives
\[
u_{b+1}^{-,N}(h)
=g_h^{L_h}\bigl(u_{b+L_h+1}^{-,N}(h)\bigr)
\ge g_h^{L_h}(-\infty)>0.
\]
This remains valid when \(b+L_h+1=N\), since the terminal field is
then \(-\infty\). One further recursion step gives
\(u_b^{-,N}(h)>h\). The backward recursion preserves the lower
bound \(h\), and every field after at least one step is at most
\(M_h\). Consequently,
\[
h\le u_m^{-,N}(h)\le u_m^{+,N}(h)\le M_h,
\qquad 0\le m\le b.
\]
The logarithmic contraction from the proof of
Theorem~\ref{thm:unbounded-gaps} now yields
\[
0\le\log u_0^{+,N}(h)-\log u_0^{-,N}(h)
\le c_h^b\log\frac{M_h}{h}.
\]
The exponential function is \(M_h\)-Lipschitz on
\([\log h,\log M_h]\), and \(\tanh\) is \(1\)-Lipschitz on
\(\mathbb R\). These two bounds prove
\eqref{eq:quantitative-gap-bound} with \(b=R_h(N)\).
If \(\ell_j\ge L_h\) and \(N\ge n_{j+1}\), then \(b=n_j\)
belongs to the same set, proving the last assertion.
\end{proof}

The extremal bound also controls arbitrary boundary conditions. If
\(\tau\) and \(\xi\) are two assignments of spins at depth \(N\),
attractiveness gives
\[
\bigl|\mu_{N,H/\beta}^{\tau}(\sigma_o)
-\mu_{N,H/\beta}^{\xi}(\sigma_o)\bigr|\le\Delta_N(H).
\]
The total-variation distance between their root marginals is at most
\(\Delta_N(H)/2\).

For fixed \(\theta\), the sufficient length satisfies
\(L_h=O(\log(1/h))\) as \(h\downarrow0\).
Unbounded gaps imply \(R_h(N)\to\infty\), so the bound also proves
nonzero-field uniqueness. More quantitatively, if
\(R_h(N)\ge\eta N-C\) for all sufficiently large \(N\), with
\(\eta>0\), then
\[
\Delta_N(H)\le
M_h\log(M_h/h)c_h^{-C}
\exp\{-\eta|\log c_h|N\}.
\]
Such spacing information is additional to unboundedness of the gaps.
Indeed, the next result rules out any universal rate of boundary
forgetting for the latter class, even at branching density one.

\begin{theorem}
\label{thm:arbitrarily-slow}
Fix \(q\ge2\), \(\theta\in(1/q,1)\), and a field satisfying
\[
0<|H|\le H_c^{\mathrm{reg}}(\theta,q),
\]
where \(H_c^{\mathrm{reg}}>0\) is the regular-tree coexistence
threshold given in Proposition~\ref{prop:regular-field}. For every sequence
\(\varepsilon_N>0\) with \(\varepsilon_N\to0\), there is a tree
\(T_{\alpha,q}\) whose lower and upper branching densities equal
one, whose branching number is \(q\), and whose nonbranching gaps
are unbounded, such that
\begin{equation}\label{eq:arbitrarily-slow-influence}
\limsup_{N\to\infty}\frac{\Delta_N(H)}{\varepsilon_N}=\infty.
\end{equation}
This tree has a unique Gibbs measure at every nonzero homogeneous
field and every finite inverse temperature. In particular,
\(\Delta_N(H)\to0\) for the parameters in the statement.
\end{theorem}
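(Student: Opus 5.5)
The tree will be built inductively as a \(q\)-ary tree with sparse inserted gaps: long runs of branching levels alternate with gaps of lengths \(k=1,2,3,\dots\), placed so far apart that the branching densities stay one. The key input is that for the regular tree at \(0<|H|\le H_c^{\mathrm{reg}}\) there is coexistence. By Theorem~\ref{thm:periodic-return} and Proposition~\ref{prop:regular-field}, the least and greatest fixed points \(v^-<v^+\) of \(Q_H\) are distinct. Hence for the regular tree \(\Delta_N^{\mathrm{reg}}(H)\to\tanh v^+-\tanh v^->0\). By symmetry I take \(H>0\).

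The first step is a finite-volume continuity observation. Let \(\alpha\) agree with \(\mathbb Z_{\ge0}\) on \(\{0,\dots,M-1\}\). Then for every \(N\le M\) the quantity \(\Delta_N(H)\) depends only on \(a_0,\dots,a_{N-1}\), so it equals the regular-tree value \(\Delta_N^{\mathrm{reg}}(H)\). The plus fields \(u_0^{+,N}\) decrease to \(v^+\) and the minus fields increase to \(v^-\). Therefore \(\Delta^{\mathrm{reg}}_N(H)\ge \delta:=\tanh v^+-\tanh v^->0\) for \textbf{every} \(N\), with no limit needed. This makes the construction clean: any \(N\) lying inside a long initial regular stretch has \(\Delta_N(H)\ge\delta\).

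The construction then runs as follows. Choose integers \(s_1<s_2<\cdots\) recursively and put gap blocks \(D_k=\{s_k,\dots,s_k+k-1\}\), with \(\alpha=\mathbb Z_{\ge0}\setminus\bigcup_k D_k\). The catch is that after the first gap is inserted, the tree is no longer regular near the root. I therefore need a lower bound on \(\Delta_N\) for \(N\) well past the inserted gaps, and for that I compose block maps. Suppose levels \(s_k+k,\dots,N-1\) are all branching, with \(N-s_k-k\) large. Then the fields at level \(s_k+k\) are \(Q_H^{N-s_k-k}(\pm\infty)\), which are at least \(v^+\) and at most \(v^-\) respectively. Propagating back to the root through the fixed finite composition \(G_k\) of the first \(s_k+k\) level maps gives \(u_0^{+,N}-u_0^{-,N}\ge G_k(v^+)-G_k(v^-)=:\gamma_k>0\). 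This is positive because \(G_k\) is strictly increasing, so \(\Delta_N(H)\ge \tanh G_k(v^+)-\tanh G_k(v^-)=:\delta_k>0\) uniformly in such \(N\). The quantity \(\delta_k\) is determined by \(s_1,\dots,s_k\) alone. Since \(\varepsilon_N\to0\), I can choose \(N_k>s_k+k\) with \(\varepsilon_{N_k}\le\delta_k/k\). I then set \(s_{k+1}\ge \max(N_k,2^{s_k})\), say, so that levels up to \(N_k\) remain branching apart from the earlier \(D_i\). This gives \(\Delta_{N_k}(H)/\varepsilon_{N_k}\ge k\to\infty\), which is \eqref{eq:arbitrarily-slow-influence}.

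It remains to verify the side claims. The gaps are unbounded by construction, since \(t:=s_k-1\) and \(s_k+k\) are consecutive elements of \(\alpha\). The deleted levels below \(N\) number \(\sum_{s_k<N}k\), which is \(O(\log N)\) under the superexponential spacing, as in Theorem~\ref{thm:equal-density}. Hence \(\underline d=\overline d=1\), and \(\br=q\) by \eqref{eq:branching-number}. Uniqueness at every nonzero field and finite temperature follows from Theorem~\ref{thm:unbounded-gaps}, and Proposition~\ref{prop:field-recursion} then gives \(\Delta_N(H)\to0\). The main point needing care is the monotonicity step: the finite-volume fields at level \(s_k+k\) must bracket \([v^-,v^+]\) for every length of the trailing regular stretch. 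This holds because \(Q_H^n(+\infty)\downarrow v^+\), \(Q_H^n(-\infty)\uparrow v^-\), and \(v^\pm\) are the extremal fixed points. The remaining bookkeeping of the recursive choice is routine.
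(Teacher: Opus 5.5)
Your proposal is correct and is essentially the paper's construction. Long regular stretches alternate with blocks of $k$ one-child levels, $\Delta_N(H)$ is bounded below by pushing the extremal fixed points $v^-<v^+$ of $Q_H$ through the strictly increasing prefix composition, and $N_k$ is chosen so that $\varepsilon_{N_k}$ is small relative to that bound. The differences are only bookkeeping: you compute $\delta_k$ after inserting the $k$-th block and secure density one via $s_{k+1}\ge 2^{s_k}$, whereas the paper computes $\delta_m$ before placing the block at $N_k$ and uses $N_k\ge k^3$.
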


\begin{proof}
By spin flip we may take \(H>0\). Proposition~\ref{prop:regular-field}
gives distinct least and greatest fixed points \(v^-<v^+\) of
\(Q_H\), including when \(H=H_c^{\mathrm{reg}}\).
Suppose that the child numbers at levels \(0,\ldots,m-1\) have
already been chosen. Their recursion composition is
\[
\mathcal P_m=f_{a_0,H}\circ\cdots\circ f_{a_{m-1},H},
\qquad f_{a,H}(u)=H+aF_\theta(u),
\]
with \(\mathcal P_0\) the identity. Every factor is strictly increasing, so
\[
\delta_m:=\tanh \mathcal P_m(v^+)-\tanh \mathcal P_m(v^-)>0.
\]
If we choose regular branching, \(a_s=q\), for
\(m\le s<N\), then
\[
u_0^{\pm,N}(H)=\mathcal P_m\bigl(Q_H^{N-m}(\pm\infty)\bigr).
\]
Monotonicity and the fixed-point equations imply
\[
Q_H^{N-m}(-\infty)\le v^-<v^+
\le Q_H^{N-m}(+\infty),
\]
and hence \(\Delta_N(H)\ge\delta_m\).

Start with the empty prefix \(m=0\). At stage \(k\ge1\), choose
an integer \(N_k>m\) so large that
\[
N_k\ge k^3,\qquad \varepsilon_{N_k}<\delta_m/k.
\]
Such a choice is possible because \(\varepsilon_N\to0\).
Set \(a_s=q\) for \(m\le s<N_k\), then set
\[
a_s=1\quad(N_k\le s<N_k+k),\qquad a_{N_k+k}=q.
\]
Continue with the new prefix length \(m=N_k+k+1\).
This procedure defines all child numbers and leaves exactly \(k\)
consecutive one-child levels in its \(k\)-th inserted block,
separated from the other such blocks by branching levels. The
nonbranching gaps are therefore unbounded. The finite-volume
recursion with boundary at \(N_k\) depends only on child numbers
at levels below \(N_k\), so later choices do not change it. Thus
\[
\frac{\Delta_{N_k}(H)}{\varepsilon_{N_k}}
\ge\frac{\delta_m}{\varepsilon_{N_k}}>k,
\]
where \(m\) is the prefix length at the start of stage \(k\).
This proves \eqref{eq:arbitrarily-slow-influence}.

For completeness, let \(r_N=N-b_N\) count the one-child levels
below \(N\). Whenever \(N_k\le N<N_{k+1}\),
\[
0\le r_N\le\sum_{i=1}^{k}i=\frac{k(k+1)}2,
\qquad
\frac{r_N}{N}\le\frac{k(k+1)}{2N_k}
\le\frac{k+1}{2k^2}\longrightarrow0.
\]
This estimate applies inside the inserted blocks as well as between
them. Consequently \(b_N/N\to1\), and the branching-number formula
gives \(\br(T_{\alpha,q})=q\).
Finally, Theorem~\ref{thm:unbounded-gaps} gives uniqueness at every
nonzero field for every finite inverse temperature; the convergence
of the extremal fields then implies \(\Delta_N(H)\to0\).
\end{proof}

The two mechanisms in this section are illustrated in
Figure~\ref{fig:boundary-influence}: sufficiently long corridors give
an upper bound on boundary influence, while long regular stretches
can postpone its decay to an arbitrarily large depth.

\begin{figure}[htbp]
\centering
\begin{tikzpicture}[x=1cm,y=1cm,>=Stealth,font=\small,
  point/.style={circle,fill=black,inner sep=1.4pt},
  block/.style={draw,rounded corners=1pt,minimum height=7mm}]
\node[anchor=west,font=\small\bfseries] at (0,0)
  {(a) A sufficient corridor makes the fields positive};
\draw[figgray] (0,-1.0)--(12.6,-1.0);
\draw[fill=figorange!10,draw=figorange] (5,-1.36) rectangle (9,-.64);
\node at (7,-1.0) {$g_h^{L_h}$};
\node[align=center] at (7,-.36)
  {$a_{b+1}=\cdots=a_{b+L_h}=1$};
\foreach \x in {0,4,5,9,12.6} \node[point] at (\x,-1.0) {};
\node at (0,-1.66) {$0$};
\node at (4,-1.66) {$b$};
\node at (5,-1.66) {$b+1$};
\node at (9,-1.66) {$b+L_h+1$};
\node at (12.6,-1.66) {$N$};
\draw[->,thick,figblue] (3.75,-.73)--(.25,-.73)
  node[midway,above] {$b$ recursion steps};
\draw[->,thick,figorange] (4.83,-1.0)--(4.17,-1.0);
\draw[->,thick,figorange] (8.75,-1.22)--(5.25,-1.22);
\node at (10.8,-.69) {$\cdots$};
\node[anchor=east] at (12.6,-2.18)
  {$u_N^{\pm,N}=\pm\infty$};
\node[anchor=west] at (0,-2.18)
  {$u_b^{-,N}>h,\qquad \Delta_N(H)\le M_h\log(M_h/h)c_h^{\,b}$};
\node[anchor=west,font=\small\bfseries] at (0,-3.10)
  {(b) Long regular stretches delay the loss of boundary influence};
\draw[fill=figgray!12,draw=figgray] (0,-4.75) rectangle (1.5,-4.05);
\node at (.75,-4.40) {prefix};
\draw[fill=figblue!8,draw=figblue] (1.5,-4.75) rectangle (8,-4.05);
\node at (4.75,-4.40) {regular levels: $a_s=q$};
\draw[fill=figorange!10,draw=figorange] (8,-4.75) rectangle (11,-4.05);
\node at (9.5,-4.40) {$a_s=1$};
\draw[fill=figblue!8,draw=figblue] (11,-4.75) rectangle (11.7,-4.05);
\node at (11.35,-4.40) {$q$};
\draw[->,figgray] (11.7,-4.40)--(12.6,-4.40);
\node[anchor=north] at (1.5,-4.82) {$m$};
\node[anchor=north] at (8,-4.82) {$N_k$};
\node[anchor=north] at (11,-4.82) {$N_k+k$};
\node[anchor=east] at (12.6,-3.70) {$m'=N_k+k+1$};
\node at (4.75,-3.70) {$N_k\ge k^3,\quad \varepsilon_{N_k}<\delta_m/k$};
\draw[decorate,decoration={brace,mirror,amplitude=4pt},figorange]
  (8,-5.30)--(11,-5.30)
  node[midway,below=5pt] {$k$ one-child levels};
\node[anchor=west] at (0,-6.08)
  {$\Delta_{N_k}(H)\ge\delta_m>k\varepsilon_{N_k},\qquad
    \Delta_N(H)\longrightarrow0$};
\end{tikzpicture}
\caption{Two roles of nonbranching stretches (schematic level axes).
In (a), $h=|H|>0$ and any admissible $b$ in
\eqref{eq:last-sufficient-corridor} gives the displayed bound;
choosing the last such $b=R_h(N)$ gives
Theorem~\ref{thm:quantitative-gap}. Arrows indicate the direction of
the backward recursion. The positivity statement is written after
spin flip to $H=h$.
In (b), $q\theta>1$ and $0<|H|\le H_c^{\mathrm{reg}}(\theta,q)$;
rectangles specify offspring numbers on half-open intervals of integer
levels. The boundary at $N_k$ precedes the newly inserted
one-child block, so its root influence still exceeds $k\varepsilon_{N_k}$.
Repeating these stages yields a density-one tree with a unique Gibbs
measure and arbitrarily slow convergence, as in
Theorem~\ref{thm:arbitrarily-slow}.}
\label{fig:boundary-influence}
\end{figure}
\FloatBarrier

\section{Eventually periodic gaps and the full-period return map}
\label{sec:periodic}

We first identify the extremal Gibbs fields with fixed points of a full-period map and then classify those fixed points. Assume that there are integers \(j_0\ge1\) and \(r\ge1\) such that
\[
\ell_{j+r}=\ell_j
\qquad\text{for every }j\ge j_0.
\]
For \(i=0,\ldots,r-1\), put
\[
\ell_i^\circ:=\ell_{j_0+i},
\qquad
d_i:=\ell_i^\circ+1,
\qquad
T_{i,H}:=\Phi_{\ell_i^\circ,H}.
\]
Thus \(d_i\) is the distance between the corresponding consecutive branching levels. Define the one-period return map
\[
\mathcal R_H
:=
T_{0,H}\circ T_{1,H}\circ\cdots\circ T_{r-1,H},
\]
where the rightmost map is applied first, and let
\[
P:=\sum_{i=0}^{r-1}d_i
=
n_{j_0+r}-n_{j_0}.
\]
The branching pattern on the tail is periodic in the level coordinate with period \(P\).

In the constant-gap case \(r=1\), if \(n_{j+1}-n_j=p\) eventually, then \(d_0=p\) and
\[
\mathcal R_H
=
\Phi_{p-1,H}
=
Q_H\circ g_H^{p-1}.
\]

\begin{theorem}
\label{thm:periodic-return}
Under the preceding assumption, every fixed point of \(\mathcal R_H\) determines a tail-periodic spherically symmetric splitting Gibbs measure. The limiting branch fields of the minus and plus states at level \(n_{j_0}\) are respectively the least and greatest fixed points of \(\mathcal R_H\). Moreover, 
\[
\mu_B^+=\mu_B^-
\quad\Longleftrightarrow\quad
\mathcal R_H\text{ has a unique fixed point}.
\]
\end{theorem}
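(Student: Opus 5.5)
The proof has three parts: construct splitting measures from fixed points, identify the extremal branch fields with the least and greatest fixed points, and derive the uniqueness criterion. Throughout I work at the branch level \(v_j=u_{n_j}\) with the block recursion \eqref{eq:branch-recursion}.

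\emph{Fixed points give measures.} The maps \(T_{i,H}\) are continuous and strictly increasing. Each maps \([-\infty,\infty]\) into \(I_H=[H-qK,H+qK]\), so \(\mathcal R_H\) is an increasing continuous self-map of the compact interval \(I_H\), and its fixed-point set is nonempty and closed. Given a fixed point \(v^*\), I set \(v_{j_0+sr}=v^*\) for \(s\ge0\). I fill in the intermediate branch fields by \(v_{j_0+sr+i}=T_{i,H}\circ\cdots\circ T_{r-1,H}(v^*)\), and the one-child levels by the \(g_H\) iterates. The fixed-point property makes this profile consistent and \(P\)-periodic on the tail. Levels below \(n_{j_0}\) are then determined by backward recursion \eqref{eq:cavity}. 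This is a finite real profile satisfying \eqref{eq:cavity}, so by the remark after Proposition~\ref{prop:field-recursion} it defines a splitting Gibbs measure through \eqref{eq:markov-reconstruction}.

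\emph{Extremal fields are the extreme fixed points.} Take \(N=n_{j_0+sr}\) for large \(s\). The finite-volume plus branch field at level \(n_{j_0}\) is obtained by applying the tail block maps to the terminal value \(+\infty\). By periodicity these block maps are exactly \(r\)-fold repetitions, so the field equals \(\mathcal R_H^{\,s}(+\infty)\). Along a general \(N\), the terminal boundary \(+\infty\) pushed through a partial period is some value \(w\le+\infty\). Monotonicity in \(N\) (Proposition~\ref{prop:field-recursion}) lets me pass to the subsequence \(N=n_{j_0+sr}\), since the limit \(u^+_{n_{j_0}}\) exists along every \(N\). The sequence \(\mathcal R_H^{\,s}(+\infty)\) is decreasing, because \(\mathcal R_H(+\infty)\le+\infty\) and \(\mathcal R_H\) is increasing. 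By continuity its limit \(v^+\) is a fixed point. If \(v^*\) is any fixed point, then \(v^*\le+\infty\) gives \(v^*=\mathcal R_H^{\,s}(v^*)\le\mathcal R_H^{\,s}(+\infty)\), so \(v^*\le v^+\). Hence \(v^+\) is the greatest fixed point, and symmetrically the minus field \(v^-\) is the least.

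The one step needing care is continuity at the boundary \(+\infty\). The first application of \(\mathcal R_H\) lands in \(I_H\), and \(F_\theta\) extends continuously to \(\pm\infty\), so this causes no difficulty. One must also check that \(n_{j_0}\) is itself a branching level, which holds by definition.

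\emph{Uniqueness criterion.} By Proposition~\ref{prop:field-recursion}, \(\mu_B^+=\mu_B^-\) holds if and only if \(u^+_m=u^-_m\) at one level \(m\); I take \(m=n_{j_0}\). By the previous step this means the least and greatest fixed points coincide, which happens exactly when \(\mathcal R_H\) has a unique fixed point. I expect no serious obstacle; the main bookkeeping is the composition order and the indexing of the subsequence \(N=n_{j_0+sr}\).
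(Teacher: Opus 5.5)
Your proposal is correct and follows essentially the same route as the paper: monotone iteration of \(\mathcal R_H\) from \(\pm\infty\) along the depths \(N=n_{j_0+sr}\) identifies the extremal branch fields with the least and greatest fixed points, and a fixed point is extended periodically along the tail to a splitting measure. The only difference is in the final equivalence. You invoke the one-level criterion of Proposition~\ref{prop:field-recursion} directly, whereas the paper re-derives both directions, using strict monotonicity down to the root and injectivity of the block maps along the tail; both arguments are valid.
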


\begin{proof}
Each map \(T_{i,H}\), and hence \(\mathcal R_H\), is continuous and strictly increasing on \(\mathbb R\), extends continuously to \([-\infty,+\infty]\), and maps the extended line into the compact interval \(I_H\). For \(M\ge1\), set
\[
x_M^-:=\mathcal R_H^{\,M}(-\infty),
\qquad
x_M^+:=\mathcal R_H^{\,M}(+\infty).
\]
Monotonicity gives
\[
x_M^-\le x_{M+1}^-,
\qquad
x_{M+1}^+\le x_M^+.
\]
Both sequences are bounded in \(I_H\), so they converge to limits \(x^-\le x^+\). Continuity gives
\[
\mathcal R_H(x^-)=x^-,
\qquad
\mathcal R_H(x^+)=x^+.
\]
If \(z\) is any fixed point of \(\mathcal R_H\), then
\[
x_M^-\le z\le x_M^+
\qquad\text{for every }M,
\]
and hence \(x^-\le z\le x^+\). Thus \(x^-\) and \(x^+\) are the least and greatest fixed points.

Choose the finite-volume depths
\[
N_M:=n_{j_0+Mr}.
\]
Starting from the terminal fields \(\pm\infty\) at level \(N_M\) and iterating the block recursion through \(M\) complete periods shows that the fields at level \(n_{j_0}\) are exactly \(x_M^\pm\). Since the full plus and minus finite-volume limits exist, this subsequence identifies their branch fields at \(n_{j_0}\) as \(x^-\) and \(x^+\).

Conversely, let \(z\) be a fixed point of \(\mathcal R_H\). Set \(z_r:=z\) and define successively
\[
z_i:=T_{i,H}(z_{i+1}),
\qquad i=r-1,r-2,\ldots,0.
\]
The fixed-point equation gives \(z_0=z_r=z\). Therefore
\[
v_{j_0+mr+i}:=z_i,
\qquad
m\ge0,\qquad 0\le i\le r-1,
\]
satisfies every block recursion on the periodic tail. Filling each gap by the scalar recursion and propagating through the finite initial segment produces a full boundary-law profile. Since both the gap word and the phase vector repeat after \(P\) levels, this profile is tail-periodic. The boundary-law construction gives the corresponding splitting Gibbs measure.

If \(\mathcal R_H\) has more than one fixed point, then \(x^-<x^+\). The finite composition from level \(n_{j_0}\) to the root is strictly increasing, so the minus and plus root fields are different and hence \(\mu_B^-\ne\mu_B^+\).

If \(\mathcal R_H\) has a unique fixed point, then \(x^-=x^+\), so the two limiting profiles agree at \(n_{j_0}\). To propagate this equality along the tail, write
\[
v_j^\pm=\Phi_{\ell_j,H}(v_{j+1}^\pm).
\]
Every block map is strictly increasing and therefore injective. Thus \(v_{j_0}^-=v_{j_0}^+\) implies \(v_{j_0+1}^-=v_{j_0+1}^+\), and induction gives equality at every later branching level. The intermediate fields and the finite initial segment are then uniquely determined by the recursion. Therefore the complete plus and minus field profiles coincide, and so do the corresponding Gibbs measures.
\end{proof}

\subsection{The complete periodic phase diagram}

The full-period reduction admits a global fixed-point classification.
The key observation is that the Schwarzian derivative remains negative
under every composition occurring in the return map.

\begin{lemma}
\label{lem:periodic-schwarzian}
For every \(H\in\mathbb R\), the map \(\mathcal R_H\) has at most
three distinct fixed points. If it has three, all three are simple
zeros of \(\mathcal R_H(u)-u\).
\end{lemma}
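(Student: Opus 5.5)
The plan is to establish that \(\mathcal R_H\) is a strictly increasing \(C^3\) map with everywhere negative Schwarzian derivative, and then apply the standard consequence: \(S\mathcal R_H<0\) and \(\mathcal R_H'>0\) prevent \(\mathcal R_H'-1\) from having a positive local minimum. The first step is the computation for the basic map \(F_\theta(u)=\arctanh(\theta\tanh u)\). Setting \(t=\tanh u\), one has \(F_\theta'(u)=\theta/(\cosh^2u-\theta^2\sinh^2u)\). Writing \(F_\theta'=\theta/D\) with \(D(u)=\tfrac{1+\theta^2}{2}\cdot\) (no: \(D=\tfrac{1-\theta^2}{2}\cosh 2u+\tfrac{1+\theta^2}{2}\)), the formula \(SF=(\log F')''-\tfrac12((\log F')')^2\) reduces to an expression in \(D,D',D''\). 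A short computation gives \(SF_\theta(u)=-2(1-\theta^2)\)\,/\,(something positive)\(\cdot\)… and the target is simply \(SF_\theta<0\) for every \(u\). If the direct expansion is messy, I will instead use that \(\tanh\) and \(\arctanh\) are Möbius-like: \(S\tanh=-2\), and \(S(\arctanh)(t)=2/(1-t^2)^2\). The chain rule \(S(f\circ g)=(Sf\circ g)(g')^2+Sg\) then yields
\[
SF_\theta(u)=\frac{2\theta^2(1-t^2)^2}{(1-\theta^2t^2)^2}\cdot\frac{1}{(1-t^2)^2}\cdot(\ldots)-2,
\]
which should simplify to \(-2(1-\theta^2)/(1-\theta^2t^2)^2<0\). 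The multiplication by \(\theta\) inside is affine and contributes nothing.

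Next, \(g_H=H+F_\theta\) and \(Q_H=H+qF_\theta\) differ from \(F_\theta\) by post-composition with affine maps, so \(Sg_H=SQ_H=SF_\theta<0\). The chain rule for compositions of maps with positive derivative and negative Schwarzian preserves negativity. Hence every \(\Phi_{\ell,H}\), and therefore \(\mathcal R_H\), satisfies \(S\mathcal R_H<0\) on \(\mathbb R\), together with \(\mathcal R_H'>0\).

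For the counting step, set \(G(u)=\mathcal R_H(u)-u\). Suppose \(G\) had four distinct zeros. By Rolle's theorem, \(G'=\mathcal R_H'-1\) would have three distinct zeros, and \(G''=\mathcal R_H''\) would vanish at two points. Between these two points \(\mathcal R_H'-1\) has an interior local minimum \(c\), with \(\mathcal R_H'(c)\le1\) and \(\mathcal R_H''(c)=0\). We also need \(\mathcal R_H'''(c)\ge0\); if the minimum is degenerate this requires a small argument. In any case, at a zero of \(\mathcal R_H''\) we have \(S\mathcal R_H=\mathcal R_H'''/\mathcal R_H'\), so \(S<0\) forces \(\mathcal R_H'''<0\) wherever \(\mathcal R_H''=0\). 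Every critical point of \(\mathcal R_H'\) is therefore a strict local maximum. This rules out two consecutive critical points of \(\mathcal R_H'\), since a maximum cannot be followed by another maximum without an intervening minimum. Hence \(\mathcal R_H''\) has at most one zero, \(G'\) has at most two zeros, and \(G\) has at most three.

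For simplicity, suppose there are three zeros \(z_1<z_2<z_3\) and one of them is a double zero, so \(G(z_i)=G'(z_i)=0\). Counting zeros of \(G'\) with this extra zero together with Rolle's theorem gives at least three zeros of \(G'\), counted so that \(\mathcal R_H''\) obtains two zeros, which contradicts the bound above. The main obstacle is the clean verification of \(SF_\theta<0\); I would double-check it through the Möbius representation. Note that \(e^{2F}\) is a Möbius function of \(e^{2u}\), since \((1+\theta t)/(1-\theta t)\) is Möbius in \(t\) and \(t\) is Möbius in \(e^{2u}\). Therefore \(SF_\theta=S(\tfrac12\log)\circ\cdots\), which is computable explicitly and is negative.
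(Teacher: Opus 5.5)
Your proof is correct and follows the paper's route: you show $SF_\theta<0$ via the chain rule, note that affine post-composition and composition preserve negativity, and then combine a minimum principle with Rolle. Your observation that every zero of $\mathcal R_H''$ is a strict local maximum of $\mathcal R_H'$ is equivalent to the paper's strict convexity of $(\mathcal R_H')^{-1/2}$, and the multiplicity count is identical. One small correction: with $t=\tanh u$, your ingredients $S\tanh=-2$ and $S\arctanh(t)=2/(1-t^2)^2$ give exactly $SF_\theta=2(F_\theta'^2-1)=-2(1-\theta^2)(1-\theta^2t^4)/(1-\theta^2t^2)^2$, not the closed form you guessed, though the sign is unchanged.
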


\begin{proof}
For a \(C^3\) map \(f\) with \(f'>0\), write
\[
Sf=\frac{f'''}{f'}-\frac32\left(\frac{f''}{f'}\right)^2.
\]
The chain rule
\(S(f\circ g)=(Sf\circ g)(g')^2+Sg\), applied to
\(F_\theta=\arctanh\circ(\theta\tanh)\), gives
\begin{equation}\label{eq:schwarzian-cavity}
SF_\theta(u)=2\bigl(F_\theta'(u)^2-1\bigr)<0,
\end{equation}
because \(0<F_\theta'\le\theta<1\). Adding a constant or
multiplying the output by a nonzero constant leaves the Schwarzian
unchanged. Thus every elementary map \(u\mapsto H+aF_\theta(u)\)
has negative Schwarzian, and the composition rule yields
\(S\mathcal R_H<0\). Consequently
\[
h_H:=(\mathcal R_H')^{-1/2},\qquad
h_H''=-\tfrac12(S\mathcal R_H)h_H>0.
\]
Strict convexity implies that \(h_H=1\), equivalently
\(\mathcal R_H'=1\), has at most two solutions. Rolle's theorem
now bounds the number of zeros of \(\mathcal R_H(u)-u\) by three.
If three distinct zeros included a multiple zero, the latter and the
two open intervals between consecutive zeros would provide three
distinct zeros of \(\mathcal R_H'-1\), a contradiction.
\end{proof}

\begin{theorem}
\label{thm:periodic-phase}
Put \(A=q^r\theta^P\). If \(A\le1\), the Gibbs measure is unique
at every homogeneous field \(H\), and \(\mathcal R_H\) has exactly
one fixed point.

If \(A>1\), there is a unique number \(H_c\in(0,qK)\) such that
\begin{equation}\label{eq:periodic-fixed-point-count}
\#\operatorname{Fix}(\mathcal R_H)=
\begin{cases}
3,& |H|<H_c,\\
2,& |H|=H_c,\\
1,& |H|>H_c.
\end{cases}
\end{equation}
In particular, the set of homogeneous fields with more than one
Gibbs measure is exactly \([-H_c,H_c]\).

At \(H=H_c\), there is a unique negative fixed point \(u_c\), and
\begin{equation}\label{eq:periodic-fold}
\mathcal R_{H_c}(u_c)=u_c,\qquad
\mathcal R_{H_c}'(u_c)=1,\qquad
\mathcal R_{H_c}''(u_c)>0,\qquad
\left.\partial_H\mathcal R_H(u_c)\right|_{H=H_c}>0.
\end{equation}
The other fixed point is positive and simple. Thus the nonzero
endpoints are nondegenerate saddle-node bifurcations, related by
spin-flip symmetry. More precisely, write the three fixed points for
\(0<H<H_c\) as \(w_1(H)<w_2(H)<w_3(H)\). Then
\begin{equation}\label{eq:periodic-square-root}
\begin{aligned}
w_1(H)&=u_c-\gamma_c\sqrt{H_c-H}+O(H_c-H),\\
w_2(H)&=u_c+\gamma_c\sqrt{H_c-H}+O(H_c-H),\\
\gamma_c&=
\left(
\frac{2\left.\partial_H\mathcal R_H(u_c)\right|_{H=H_c}}
{\mathcal R_{H_c}''(u_c)}
\right)^{1/2}>0.
\end{aligned}
\end{equation}
Here the merging branches are the least and the middle fixed points;
the greatest fixed point \(w_3(H)>0\) remains simple.
\end{theorem}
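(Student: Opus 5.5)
The plan is to work with the displacement \(D(H,u):=\mathcal R_H(u)-u\) and combine three structural facts with Lemma~\ref{lem:periodic-schwarzian} and Theorem~\ref{thm:periodic-return}. First, spin-flip symmetry: since every block map satisfies \(H+aF_\theta(-u)=-(-H+aF_\theta(u))\), we have \(\mathcal R_{-H}(u)=-\mathcal R_H(-u)\). In particular \(\mathcal R_0\) is odd, and \(\mathcal R_0'(0)=q^r\theta^P=A\), because \(F_\theta'(0)=\theta\) and the period contains \(r\) factors \(q\) and \(P\) factors \(\theta\). Second, field monotonicity: every elementary map is increasing in \(H\) and in \(u\), so \(\partial_H\mathcal R_H(u)>0\), and hence \(D\) is strictly increasing in \(H\). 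Third, boundary behaviour: \(\mathcal R_H\) takes values in \(I_H\), so \(D(H,-\infty)=+\infty\) and \(D(H,+\infty)=-\infty\). By spin flip it suffices to treat \(H\ge0\).

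\emph{Case \(A\le1\).} The function \(h_0=(\mathcal R_0')^{-1/2}\) is even and strictly convex, so it is uniquely minimized at \(0\). Hence \(\mathcal R_0'(u)<\mathcal R_0'(0)=A\le1\) for \(u\ne0\). Thus \(D(0,\cdot)\) is strictly decreasing and \(0\) is the only fixed point. Theorem~\ref{thm:periodic-return} gives zero-field uniqueness, and the Schonmann--Tanaka field monotonicity extends uniqueness to all \(H\). The same theorem then gives exactly one fixed point of \(\mathcal R_H\) for every \(H\).

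\emph{Case \(A>1\).} At \(H=0\), oddness, \(\mathcal R_0'(0)>1\) and boundedness give fixed points \(-u^*<0<u^*\), and the lemma shows there are no others. All three are simple, so by the implicit function theorem three fixed points persist for small \(|H|\). Let \(H_c\) be the supremum of the nonnegative coexistence fields; it is positive by the persistence just noted. Next, coexistence at \(H>0\) forces the least fixed point \(x^-(H)\) to be negative. If instead all fixed points were \(\ge0\), the limiting minus branch fields would be positive along the tail. The logarithmic contraction from the proof of Theorem~\ref{thm:unbounded-gaps}, with constant \(c_H<1\), then makes \(\log\mathcal R_H(e^x)\) a contraction on the positive region, which allows only one fixed point there. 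The same mechanism gives \(H_c<qK\), since for \(H\ge qK\) every field in \(I_H\) is nonnegative and a refinement of the contraction applies.

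Now take \(0<H<H_c\) and some \(H'\in(H,H_c)\) with coexistence, and put \(y=x^-(H')<0\). Monotonicity in \(H\) gives \(D(H,y)<D(H',y)=0\). Spin flip shows \(-y\) is the greatest fixed point at \(-H'\), and since \(-H'<H\), also \(D(H,-y)>0\) with \(-y>y\). The sign pattern \(+,-,+,-\) along \(-\infty<y<-y<+\infty\) gives three zeros, and the lemma gives exactly three. For \(H>H_c\) there is uniqueness by definition, so Theorem~\ref{thm:periodic-return} gives one fixed point.

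\emph{The endpoint.} This is the step I expect to be the main obstacle: closedness and the exact count at \(H_c\). Suppose there were a single fixed point at \(H_c\). It would then be the limit of all three branches, a triple degenerate zero. But \(\mathcal R'=1\) together with \(\mathcal R''=0\) means \(h=1\) and \(h'=0\) there, so convexity gives \(\mathcal R'<1\) everywhere else. That makes the fixed point isolated and stable in \(H\), contradicting three fixed points at nearby \(H<H_c\). Suppose instead there were three fixed points at \(H_c\). The lemma makes them simple, so the implicit function theorem would extend coexistence beyond \(H_c\). Hence there are exactly two fixed points at \(H_c\), one of them double.

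The double zero \(u_c\) comes from merging \(w_1\) and \(w_2\), which lie at or below \(x^-<0\) and the middle point respectively, so \(u_c<0\). The positive greatest fixed point remains simple. The convexity argument above also rules out \(\mathcal R''_{H_c}(u_c)=0\). Since \(D(H_c,\cdot)>0\) on both sides near \(u_c\) (it has a local minimum \(0\) there), the second derivative satisfies \(\mathcal R''_{H_c}(u_c)>0\). The inequality \(\partial_H\mathcal R_H(u_c)>0\) is the monotonicity already noted.

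Finally, expand
\[
D(H,u)=-\partial_H\mathcal R_H(u_c)\,(H_c-H)+\tfrac12\mathcal R''_{H_c}(u_c)(u-u_c)^2+O\bigl(|H_c-H|\,|u-u_c|+|u-u_c|^3+(H_c-H)^2\bigr).
\]
Solving \(D=0\) in the variable \(s=\sqrt{H_c-H}\) via the implicit function theorem yields the two branches \(u_c\pm\gamma_c s+O(s^2)\) with the stated \(\gamma_c\), which is \eqref{eq:periodic-square-root}. The coexistence set is then \([-H_c,H_c]\) by Theorem~\ref{thm:periodic-return}.
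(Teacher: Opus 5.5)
Your route differs from the paper's. The paper works entirely with the map. It sets \(m(H)=\min_{u\le0}(\mathcal R_H(u)-u)\) and proves that \(m\) is continuous, with \(m(H_2)\ge m(H_1)+(H_2-H_1)\) and \(m(0)<0<m(qK)\). It also gets a unique, simple positive fixed point for \(H>0\) from concavity of \(\mathcal R_H\) on \((0,\infty)\). The zero of \(m\) defines \(H_c\), and the minimizer at \(H_c\) is automatically a negative double zero. So the endpoint count, \(H_c<qK\), and closedness of the coexistence interval come at no extra cost. The case \(A\le1\) is handled directly, via \(D_0>0\) on \((-\infty,0)\) and \(\partial_H\mathcal R_H\ge1\).

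You instead define \(H_c\) as the supremum of the coexistence set and get openness from the implicit function theorem. You obtain the interior count by the spin-flip sign pattern at \(y<0<-y\), which is a nice argument. You replace concavity by the logarithmic contraction, and for \(A\le1\) you detour through Gibbs uniqueness and Schonmann--Tanaka. These steps are valid. The price is that the endpoint must be reconstructed by contradiction.

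That is where the gap is: your exclusion of a lone fixed point at \(H_c\) is not justified as written. From \(\mathcal R_{H_c}'<1\) away from the triple zero \(z\), you conclude that \(z\) is ``stable in \(H\).'' But a strictly decreasing function with a degenerate zero can acquire three zeros under small perturbation; compare \(-x^3\) with \(-x^3+\epsilon x\). The claim is true here only because of transversality, \(\partial_H\mathcal R_H\ge1\), combined with \(\mathcal R_{H_c}'''(z)=S\mathcal R_{H_c}(z)<0\). Together these make the zero set near \((H_c,z)\) a strictly increasing graph \(H=\eta(u)\), and you would have to supply that local analysis.

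A shorter repair uses what you already proved:
\begin{itemize}
\item for \(0<H<H_c\), the least fixed point is negative, by your contraction argument;
\item the greatest fixed point is positive, since \(\mathcal R_H(0)\ge H>0\);
\item their limit points as \(H\uparrow H_c\) are fixed points of \(\mathcal R_{H_c}\), one \(\le0\) and one \(\ge0\);
\item \(0\) is not fixed, so there are at least two fixed points at \(H_c\).
\end{itemize}

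A few smaller steps also need tightening.
\begin{itemize}
\item ``One of them double'' should come from the sign pattern: two simple zeros cannot connect \(D=+\infty\) at \(-\infty\) to \(D=-\infty\) at \(+\infty\).
\item \(u_c<0\) should come from simplicity of the unique positive fixed point, not from the merging heuristic.
\item \(H_c<qK\) needs no ``refinement.'' The contraction already gives \(H_c\le qK\), and coexistence at \(H_c\) then makes the inequality strict.
\end{itemize}
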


\begin{proof}
Write \(D_H(u)=\mathcal R_H(u)-u\). At \(H=0\), each block map
is \(T_{i,0}(u)=qF_\theta^{\circ d_i}(u)\). Hence
\[
\mathcal R_0'(0)=A,\qquad
0<\mathcal R_0(u)<Au\quad(u>0),
\]
and \(\mathcal R_0\) is odd. If \(A\le1\), zero is therefore
the only fixed point at zero field. If \(A>1\), then
\(D_0(u)>0\) for sufficiently small positive \(u\), whereas
\(D_0(u)\to-\infty\) as \(u\to+\infty\). Oddness and
Lemma~\ref{lem:periodic-schwarzian} give exactly three simple
fixed points: one negative, zero, and one positive.

Fix \(H>0\). Every elementary map sends \([0,\infty)\) into
\((0,\infty)\), is increasing, and is strictly concave on
\((0,\infty)\). Their composition \(\mathcal R_H\) has the same
properties and satisfies \(\mathcal R_H(0)>0\). Since its range
is bounded, it has a positive fixed point. At any such point \(z\),
concavity gives
\[
\mathcal R_H'(z)
\le\frac{\mathcal R_H(z)-\mathcal R_H(0)}{z}<1.
\]
The strict concavity of \(D_H\), together with \(D_H(0)>0\),
also shows that this positive fixed point is unique. It is simple.

Differentiating the elementary maps with respect to \(H\) gives
one, while their derivatives with respect to their arguments are
positive. The chain rule therefore yields
\begin{equation}\label{eq:periodic-field-monotonicity}
\partial_H\mathcal R_H(u)\ge1
\quad\text{for all }u,H.
\end{equation}
When \(A\le1\), oddness and the zero-field inequality imply
\(D_0(u)>0\) for \(u<0\). Thus \(D_H(u)>0\) for \(u\le0\)
and \(H>0\). The positive fixed point is the only one. The identity
\(\mathcal R_{-H}(-u)=-\mathcal R_H(u)\) handles negative fields,
proving the first assertion by Theorem~\ref{thm:periodic-return}.

Suppose henceforth that \(A>1\), and define
\[
m(H)=\min_{u\le0}D_H(u),\qquad H\ge0.
\]
Since \(\mathcal R_H(u)\in[H-qK,H+qK]\), we have
\(D_H(u)\ge H-qK-u\). This bound tends to \(+\infty\) as
\(u\to-\infty\), uniformly for \(H\) in compact intervals.
Comparing with the bounded values \(D_H(0)\) shows that all
minimizers lie in one compact interval whenever \(H\) ranges over
a fixed compact interval. Thus the minimum is attained and \(m\)
is continuous. Equation~\eqref{eq:periodic-field-monotonicity}
implies
\[
m(H_2)\ge m(H_1)+(H_2-H_1)\qquad(H_2>H_1).
\]
Since \(D_0'(0)=A-1>0\), we have \(m(0)<0\). At \(H=qK\),
the outermost elementary map is \(qK+qF_\theta\), which is
strictly positive at every finite argument. Hence
\(D_{qK}(u)>0\) for all \(u\le0\), and attainment gives
\(m(qK)>0\). There is therefore a unique \(H_c\in(0,qK)\)
with \(m(H_c)=0\).

For \(0<H<H_c\), the inequalities \(D_H(0)>0\),
\(D_H(u)\to+\infty\) as \(u\to-\infty\), and \(m(H)<0\)
give two distinct negative zeros. Along with the positive zero,
these are exactly three simple fixed points by
Lemma~\ref{lem:periodic-schwarzian}. At \(H=H_c\), a minimizing
point \(u_c<0\) is a negative zero with \(D_{H_c}'(u_c)=0\).
Together with the simple positive zero it gives exactly two fixed
points: a third would contradict the final assertion of that lemma.
For \(H>H_c\), \(m(H)>0\), so only the positive fixed point
remains. Zero-field symmetry and spin-flip symmetry establish
\eqref{eq:periodic-fixed-point-count};
Theorem~\ref{thm:periodic-return} gives the coexistence statement.

It remains to prove nondegeneracy. At the negative minimum \(u_c\),
\(\mathcal R_{H_c}'(u_c)=1\) and
\(\mathcal R_{H_c}''(u_c)\ge0\). If the latter were zero, the
strictly convex function \(h_{H_c}=(\mathcal R_{H_c}')^{-1/2}\)
would satisfy \(h_{H_c}(u_c)=1\) and \(h_{H_c}'(u_c)=0\).
It would follow that \(h_{H_c}(u)>1\), and therefore
\(D_{H_c}'(u)<0\), for every \(u\ne u_c\). This would make
\(D_{H_c}\) strictly decreasing, contradicting its two distinct
zeros. Hence \(\mathcal R_{H_c}''(u_c)>0\); positivity of the
field derivative follows from
\eqref{eq:periodic-field-monotonicity}.

All the maps are real analytic. In the Taylor expansion of
\(D_H(u)\) at \((u_c,H_c)\), the leading terms are
\[
\left.\partial_H\mathcal R_H(u_c)\right|_{H=H_c}(H-H_c)
+\tfrac12\mathcal R_{H_c}''(u_c)(u-u_c)^2.
\]
Substituting \(H=H_c-t^2\) and \(u=u_c+tz\), dividing by
\(t^2\), and applying the implicit-function theorem at
\(t=0\), \(z=\pm\gamma_c\), proves
\eqref{eq:periodic-square-root}. The positive fixed point remains
simple by the concavity argument above.
\end{proof}

Figure~\ref{fig:periodic-bifurcation} illustrates the classification
using the regular-tree return map. In particular, the fold at a
nonzero critical field leaves two distinct extremal fixed points,
so coexistence persists at the endpoint.

\begin{figure}[htbp]
\centering
\begin{tikzpicture}[x=3.45cm,y=0.65cm,font=\small,
  line cap=round,line join=round]
  \def\figHc{0.4180480043}
  \def\figUc{1.0729483047}
  \def\figOther{2.5639446137}
  \fill[figblue!7] (-\figHc,-3.35) rectangle (\figHc,3.35);
  \draw[figgray,densely dotted] (-\figHc,-3.35)--(-\figHc,3.35);
  \draw[figgray,densely dotted] (\figHc,-3.35)--(\figHc,3.35);
  \draw[->,figgray] (-1.18,0)--(1.23,0) node[right,text=black] {$H$};
  \draw[->,figgray] (0,-3.4)--(0,3.55) node[right,text=black] {$u$};
  \node[below left,inner sep=2pt] at (0,0) {$0$};
  \draw[figblue,very thick]
    plot[domain=-3.3:-\figUc,samples=101,variable=\t]
    ({\t-ln((1.8*exp(2*\t)+0.2)/(0.2*exp(2*\t)+1.8))},{\t});
  \draw[figorange,very thick,dashed]
    plot[domain=-\figUc:\figUc,samples=101,variable=\t]
    ({\t-ln((1.8*exp(2*\t)+0.2)/(0.2*exp(2*\t)+1.8))},{\t});
  \draw[figblue,very thick]
    plot[domain=\figUc:3.3,samples=101,variable=\t]
    ({\t-ln((1.8*exp(2*\t)+0.2)/(0.2*exp(2*\t)+1.8))},{\t});
  \fill (-\figHc,\figUc) circle[radius=1.65pt];
  \fill (-\figHc,-\figOther) circle[radius=1.65pt];
  \fill (\figHc,-\figUc) circle[radius=1.65pt];
  \fill (\figHc,\figOther) circle[radius=1.65pt];
  \node[below] at (-\figHc,-3.38) {$-H_c$};
  \node[below] at (\figHc,-3.38) {$H_c$};
  \node[anchor=east] at (-1.19,4.1) {$\#\operatorname{Fix}(\mathcal R_H)$};
  \node at (-0.88,4.1) {$1$};
  \node at (-\figHc,4.1) {$2$};
  \node at (0,4.1) {$3$};
  \node at (\figHc,4.1) {$2$};
  \node at (0.88,4.1) {$1$};
  \draw[figblue,very thick] (-1.02,-4.35)--(-0.85,-4.35);
  \node[anchor=west] at (-0.82,-4.35) {outer branches};
  \draw[figorange,very thick,dashed] (0.13,-4.35)--(0.30,-4.35);
  \node[anchor=west] at (0.33,-4.35) {middle branch};
\end{tikzpicture}
\caption{The fixed-point diagram in Theorem~\ref{thm:periodic-phase},
illustrated by the regular tree with \(q=2\) and \(\theta=0.8\).
The plotted curve is the exact relation
\(H=u-2\arctanh(0.8\tanh u)\), with
\(H_c\simeq0.418048\). The shaded interval includes both critical
fields. At \(H=H_c\), the lower and middle branches meet at a fold,
while the upper fixed point remains distinct; the reflected statement
holds at \(-H_c\). The four filled dots display the two fixed points
at each endpoint. The counts refer to fixed points of the return map,
not to the number of all Gibbs measures.}
\label{fig:periodic-bifurcation}
\end{figure}
\FloatBarrier

\begin{corollary}
\label{cor:periodic-critical}
The zero-field threshold is \(q^r\theta^P=1\). At equality the
unique fixed point is zero, and it has exact multiplicity three:
\begin{equation}\label{eq:periodic-critical-cubic}
\mathcal R_0(u)-u=-C_{\mathrm{per}}u^3+O(u^5),
\qquad C_{\mathrm{per}}>0.
\end{equation}
More explicitly, set
\[
A_i=q\theta^{d_i},\qquad
c_i=\frac{A_i(1-\theta^{2d_i})}{3}.
\]
Then
\[
C_{\mathrm{per}}
=\sum_{i=0}^{r-1}c_i
\left(\prod_{k=0}^{i-1}A_k\right)
\left(\prod_{k=i+1}^{r-1}A_k^3\right).
\]
Empty products equal one. 
\end{corollary}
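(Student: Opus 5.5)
The plan is a Taylor expansion of each block map at zero field, followed by composition. At \(H=0\) each block map is \(T_{i,0}=qF_\theta^{\circ d_i}\), which is odd and real analytic. From \eqref{eq:derivative} we have \(F_\theta'(0)=\theta\). A direct expansion of \(\arctanh(\theta\tanh u)\) gives
\[
F_\theta(u)=\theta u-\tfrac13\theta(1-\theta^2)u^3+O(u^5).
\]
I will show by induction on \(d\) that
\[
F_\theta^{\circ d}(u)=\theta^d u-\kappa_d u^3+O(u^5),\qquad \kappa_d=\tfrac13\theta^d(1-\theta^{2d}).
\]
A cleaner route than induction is to note that \(\tanh F_\theta(u)=\theta\tanh u\). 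Iterating gives \(F_\theta^{\circ d}(u)=\arctanh(\theta^d\tanh u)=F_{\theta^d}(u)\), so the cubic coefficient follows from the one-step formula with \(\theta\) replaced by \(\theta^d\). Hence
\[
T_{i,0}(u)=A_iu-c_iu^3+O(u^5),
\]
with \(A_i\) and \(c_i\) exactly as in the statement.

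Next I compose. For odd maps \(f(u)=a u-cu^3+O(u^5)\) and \(g(u)=bu-eu^3+O(u^5)\), the composition satisfies
\[
f\circ g(u)=ab\,u-(ae+cb^3)u^3+O(u^5).
\]
Applying this inductively to \(\mathcal R_0=T_{0,0}\circ\cdots\circ T_{r-1,0}\), the linear coefficient is \(\prod A_k=q^r\theta^P=A\). The cubic term coming from \(T_{i,0}\) is multiplied by the linear factors of the outer maps \(T_{0},\dots,T_{i-1}\). It is also multiplied by the cube of the linear factors of the inner maps \(T_{i+1},\dots,T_{r-1}\), because those maps are applied first and their output enters \(T_i\) cubed. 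This yields exactly
\[
C_{\mathrm{per}}=\sum_i c_i\Bigl(\prod_{k<i}A_k\Bigr)\Bigl(\prod_{k>i}A_k^3\Bigr).
\]
By oddness, \(\mathcal R_0\) has no even-order terms, so the remainder is \(O(u^5)\). At \(A=1\) this gives \(\mathcal R_0(u)-u=-C_{\mathrm{per}}u^3+O(u^5)\).

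For positivity, note that each \(A_k>0\) and each \(c_i>0\), since \(0<\theta<1\). Therefore \(C_{\mathrm{per}}>0\), and zero is a zero of \(\mathcal R_0(u)-u\) of exact multiplicity three.

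The threshold statement follows from Theorem~\ref{thm:periodic-phase} together with Theorem~\ref{thm:periodic-return}. When \(A\le1\) there is a unique fixed point. When \(A>1\) there are three fixed points at \(H=0\), and by oddness the unique fixed point is zero when \(A=1\). Hence zero-field coexistence holds exactly when \(q^r\theta^P>1\). This is consistent with Theorem~\ref{thm:exact}, since \(b_N\approx rN/P\) on the tail.

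The only real obstacle is the bookkeeping in the composition rule, in particular getting the order right so that the outer factors appear linearly and the inner factors appear cubed. I will verify this with the two-map formula and then apply induction on \(r\), peeling off \(T_{0,0}\) from the left.
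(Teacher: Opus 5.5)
Your proposal is correct and follows essentially the same route as the paper. Both use the identity \(F_\theta^{\circ d}(u)=\arctanh(\theta^d\tanh u)\) to get \(T_{i,0}(u)=A_iu-c_iu^3+O(u^5)\), then compose in the order defining \(\mathcal R_0\), use positivity of all \(A_i,c_i\), and invoke Theorem~\ref{thm:periodic-phase} for uniqueness at \(A=1\); your two-map rule \(f\circ g(u)=ab\,u-(ae+cb^3)u^3+O(u^5)\) simply makes explicit the bookkeeping the paper leaves implicit.
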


\begin{proof}
The identity
\(F_\theta^{\circ d}(u)=\arctanh(\theta^d\tanh u)\) gives
\[
T_{i,0}(u)=A_i u-c_i u^3+O(u^5).
\]
Composing these expansions in the order defining \(\mathcal R_0\)
gives linear coefficient \(\prod_iA_i=A\) and cubic coefficient
\(-C_{\mathrm{per}}\). Every \(c_i\) and \(A_i\) is positive,
so \(C_{\mathrm{per}}>0\). At \(A=1\) this proves
\eqref{eq:periodic-critical-cubic}, and uniqueness follows from
Theorem~\ref{thm:periodic-phase}.
\end{proof}

The periodic tail has branching density \(r/P\), so its threshold
also reads \(\theta q^{r/P}=1\). Since \(b_N-(r/P)N=O(1)\),
the critical series in Theorem~\ref{thm:exact} diverges, in agreement
with the unique critical fixed point.

\subsection{The regular-tree benchmark}
The classical regular-tree formula makes the comparison in
Theorem~\ref{thm:equal-density} explicit; see also
\cite{Georgii2011,SchonmannTanaka1998}.

\begin{proposition}
\label{prop:regular-field}
On the rooted \(q\)-ary tree the Gibbs measure is unique at every
field if \(q\theta\le1\). If \(q\theta>1\), put
\[
s_*=\sqrt{\frac{q\theta-1}{\theta(q-\theta)}},\qquad
u_*=\arctanh s_*,\qquad
H_c^{\mathrm{reg}}(\theta,q)=q\arctanh(\theta s_*)-\arctanh s_*>0.
\]
The number of distinct fixed points of \(u=H+qF_\theta(u)\) is
one for \(|H|>H_c^{\mathrm{reg}}\), two for
\(|H|=H_c^{\mathrm{reg}}\), and three for
\(|H|<H_c^{\mathrm{reg}}\). 
\end{proposition}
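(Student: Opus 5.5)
The regular tree is the special case $r=1$, $P=1$ of the periodic setting: the return map is $\mathcal R_H=Q_H$ and $A=q\theta$. Theorem~\ref{thm:periodic-phase} therefore already gives three things. It gives uniqueness at every field when $q\theta\le1$. When $q\theta>1$, it gives a unique $H_c\in(0,qK)$ with fixed-point counts $3,2,1$ as stated. It also shows that at $H=H_c$ the double fixed point $u_c<0$ is the unique negative point with $Q_{H_c}'(u_c)=1$ that minimizes $D_{H_c}(u)=Q_{H_c}(u)-u$ over $u\le0$. What remains is to identify $H_c$ with the explicit formula $H_c^{\mathrm{reg}}$.

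\textbf{Step 1: locate the tangency point.} Solve $qF_\theta'(u)=1$ using \eqref{eq:derivative}. Put $s=\tanh u$. The equation becomes
\[
q\theta(1-s^2)=1-\theta^2s^2,
\qquad\text{so}\qquad
s^2=\frac{q\theta-1}{q\theta-\theta^2}=s_*^2 .
\]
This has a real solution exactly when $q\theta>1$, and then $s_*\in(0,1)$ because $q\theta-1<q\theta-\theta^2$. Thus the only points with $Q_H'(u)=1$ are $u=\pm u_*$. Since $F_\theta'$ is even and strictly decreasing on $(0,\infty)$, the function $D_H'=qF_\theta'-1$ is negative for $u<-u_*$, positive on $(-u_*,u_*)$, and negative for $u>u_*$. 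Hence on $(-\infty,0]$ the minimizer of $D_H$ is $-u_*$, independently of $H$, and $u_c=-u_*$.

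\textbf{Step 2: compute the critical field.} In the notation of the proof of Theorem~\ref{thm:periodic-phase}, the minimum is
\[
m(H)=D_H(-u_*)=H-qF_\theta(u_*)+u_*,
\]
using that $F_\theta$ is odd. Setting $m(H_c)=0$ gives
\[
H_c=qF_\theta(u_*)-u_*=q\arctanh(\theta s_*)-\arctanh s_*,
\]
which is $H_c^{\mathrm{reg}}$. Positivity follows in either of two ways. It is part of Theorem~\ref{thm:periodic-phase}, since $H_c\in(0,qK)$. Directly, $u\mapsto qF_\theta(u)-u$ vanishes at $0$ and has positive derivative on $(0,u_*)$, so it is positive at $u_*$.

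\textbf{Step 3: count fixed points for negative fields and conclude.} Negative fields follow from the symmetry $Q_{-H}(-u)=-Q_H(u)$, which gives the counts for $|H|$. Combining Steps 1 and 2 with Theorem~\ref{thm:periodic-phase} yields the stated counts. Theorem~\ref{thm:periodic-return} then translates more than one fixed point into non-uniqueness.

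The main obstacle is only bookkeeping. One must check that the minimizer of $D_H$ on $u\le0$ is the interior critical point $-u_*$ and not the endpoint $u=0$ or the limit $u\to-\infty$. This follows from the sign pattern of $D_H'$ established in Step 1 together with the fact that $D_H(u)\to+\infty$ as $u\to-\infty$.
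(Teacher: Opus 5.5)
Your argument is correct, but it is organized differently from the paper's proof. You specialize Theorem~\ref{thm:periodic-phase} to $r=P=1$, $\mathcal R_H=Q_H$, $A=q\theta$, and this yields both uniqueness for $q\theta\le1$ and the $3/2/1$ counts. That step is legitimate, since that theorem's proof never uses this proposition. You then use the explicit computation only to identify its abstract threshold, through $m(H)=D_H(-u_*)=H-H_c^{\mathrm{reg}}$.

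The paper does not pass through the periodic classification at all. It writes the fixed-point equation as $H=G(u)=u-qF_\theta(u)$. When $q\theta\le1$, $G$ is a strictly increasing bijection of $\mathbb R$, since $G'\ge0$ and $G'=0$ can occur only at $u=0$ when $q\theta=1$. When $q\theta>1$, the paper reads the counts directly off the increase--decrease--increase shape of $G$. Its local maximum $G(-u_*)=H_c^{\mathrm{reg}}$ and local minimum $G(u_*)=-H_c^{\mathrm{reg}}$ come from exactly your Step~1 computation. Theorem~\ref{thm:periodic-return} enters only at the end, to pass to all Gibbs measures.

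Since $D_H=H-G$, your Step~1 sign pattern already proves the counts on its own. So the appeal to the Schwarzian lemma, and to the concavity and field-monotonicity arguments inside Theorem~\ref{thm:periodic-phase}, is more machinery than this case needs. Your route does buy two things. It gives an explicit check that the $H_c$ of the general periodic theorem equals the closed form. It also inherits the nondegenerate fold and the square-root splitting at $\pm H_c^{\mathrm{reg}}$ at no extra cost. The paper's route buys a short, self-contained calculus proof. That proof also records directly the extremal values of $u\mapsto qF_\theta(u)-u$ that Corollary~\ref{cor:bounded-gap-explicit} later uses with $\theta$ replaced by $\lambda$.
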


\begin{proof}
Write the equation as \(H=G(u)=u-qF_\theta(u)\). If
\(q\theta\le1\), then \(G'\ge0\), with equality possible only
at \(u=0\) when \(q\theta=1\). Hence \(G\) is strictly increasing
from \(-\infty\) to \(+\infty\). If \(q\theta>1\), solving
\(qF_\theta'(u)=1\) gives exactly \(u=\pm u_*\), since
\(0<s_*^2<1\). The derivative of \(G\) is positive outside
\([-u_*,u_*]\) and negative inside. Oddness gives its local maximum
\(G(-u_*)=H_c^{\mathrm{reg}}>0\) and local minimum
\(G(u_*)=-H_c^{\mathrm{reg}}\). This proves the fixed-point counts.
At an endpoint, the tangential fixed point
\(-\operatorname{sgn}(H)u_*\) is double because
\(F_\theta''(\pm u_*)\ne0\); the other fixed point is simple.
Now, we apply Theorem~\ref{thm:periodic-return} with a one-step period to
obtain the assertion for all Gibbs measures.
\end{proof}

\section{Uniform coexistence for bounded gaps}
\label{sec:bounded-gaps}

The unbounded-gap theorem has a quantitative converse at sufficiently
low temperature. The sharp comparison within a class with a prescribed
maximum gap is with the periodic tree having that gap at every step.
For \(L\in\mathbb Z_{\ge0}\), write
\[
\alpha^{(L)}=\{k(L+1):k\ge0\},\qquad
\mathcal C_\theta^{(L)}=\mathcal C_\theta(\alpha^{(L)}),\qquad
H_c^{(L)}=H_c(\theta;\alpha^{(L)},q),
\]
where \(\mathcal C_\theta(\alpha)\) displays the dependence of the
coexistence set on the branching sequence. The return map of
\(T_{\alpha^{(L)},q}\) is \(\Phi_{L,H}\).

\begin{theorem}
\label{thm:bounded-gap-comparison}
Suppose that \(\ell_j\le L\) for all sufficiently large \(j\). Then
\begin{equation}\label{eq:bounded-gap-inclusion}
\mathcal C_\theta^{(L)}\subseteq\mathcal C_\theta(\alpha).
\end{equation}
If \(q\theta^{L+1}\le1\), then
\(\mathcal C_\theta^{(L)}=\varnothing\) and \(H_c^{(L)}=0\).
If \(q\theta^{L+1}>1\), then \(H_c^{(L)}>0\) and
\[
[-H_c^{(L)},H_c^{(L)}]
=\mathcal C_\theta^{(L)}\subseteq\mathcal C_\theta(\alpha).
\]
In particular both displayed endpoints give coexistence. The set
\(\mathcal C_\theta^{(L)}\) is the largest set of fields for which
coexistence can be guaranteed uniformly over all eventually
\(L\)-bounded gap sequences.
\end{theorem}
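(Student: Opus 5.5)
The plan is a monotone comparison of block maps. The key observation: for fixed \(H\), the one-child map \(g_H\) and the branching map \(Q_H\) are increasing, and inserting an extra one-child level should push the minus field up, and the plus field down, at least for \(H\ge0\). Concretely, I will prove that for \(0\le\ell\le L\) and \(H\ge0\) the block maps satisfy \(\Phi_{\ell,H}(u)\le\Phi_{L,H}(u)\) for \(u\le w\) and \(\Phi_{\ell,H}(u)\ge\Phi_{L,H}(u)\) for \(u\ge w'\). Here \(w<w'\) are suitable fixed points of \(\Phi_{L,H}\), namely its least and greatest fixed points \(x^-\), \(x^+\). Rather than comparing the maps globally, I will run the comparison on the invariant intervals.

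\textbf{Step 1 (invariant intervals).} Take \(0\le H\le H_c^{(L)}\) with \(q\theta^{L+1}>1\). Theorem~\ref{thm:periodic-phase} (with \(r=1\), \(P=L+1\), \(A=q\theta^{L+1}\)) gives least and greatest fixed points \(x^-<x^+\) of \(\Phi_{L,H}\); at \(H=H_c^{(L)}\) they are the fold point \(u_c<0\) and the positive simple point. I claim:
\begin{itemize}
\item[(a)] \(\Phi_{\ell,H}(x^+)\ge x^+\) for every \(\ell\le L\);
\item[(b)] \(\Phi_{\ell,H}(x^-)\le x^-\) for every \(\ell\le L\).
\end{itemize}

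For (a), note \(x^+>0\) and \(x^+=Q_H(g_H^L(x^+))\). Since \(g_H\) has fixed point \(r_H\ge H\ge 0\), write \(y_k=g_H^{k}(x^+)\). If \(x^+\ge r_H\), then \(y_k\) decreases in \(k\), so \(g_H^\ell(x^+)\ge g_H^L(x^+)\) and \(\Phi_\ell(x^+)\ge x^+\). If \(x^+<r_H\), then \(y_k\) increases toward \(r_H\). I then need \(x^+\ge r_H\) itself, which follows since \(x^+=H+qF_\theta(g^L_H(x^+))\ge H+F_\theta(\cdot)\) together with a short argument.

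Claim (b) is the delicate one: \(x^-\) is negative, so \(y_k=g_H^k(x^-)\) increases toward \(r_H\ge0\). Hence \(g_H^\ell(x^-)\le g_H^L(x^-)\) for \(\ell\le L\), and therefore \(\Phi_\ell(x^-)\le Q_H(g_H^L(x^-))=x^-\). This monotonicity in \(\ell\) below the attracting fixed point of \(g_H\) is the crux, and it holds precisely because \(x^-<r_H\). The analogous statement (a) uses \(x^+\ge r_H\), and I expect verifying this to be the main technical point. I would try showing \(\Phi_{L,H}(r_H)\ge r_H\), i.e.\ \(H+qF_\theta(r_H)\ge H+F_\theta(r_H)=r_H\), which is immediate. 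Since \(\Phi_{L,H}(r_H)\ge r_H\), the greatest fixed point satisfies \(x^+\ge r_H\).

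\textbf{Step 2 (sandwich).} By (a) and (b), monotonicity of all block maps shows that \([x^+,\infty]\) and \([-\infty,x^-]\) are mapped into themselves by every \(\Phi_{\ell,H}\) with \(\ell\le L\). For a tree with \(\ell_j\le L\) for \(j\ge j_0\), the finite-volume branch fields at level \(n_{j_0}\) therefore satisfy \(v^{+,N}\ge x^+\) and \(v^{-,N}\le x^-\). This is because they are compositions of such maps applied to \(\pm\infty\), with partial gaps at the boundary handled by \(g_H^k(\pm\infty)\), which stays in the respective region by the same monotonicity. Passing to the limit gives \(v^+_{j_0}-v^-_{j_0}\ge x^+-x^->0\). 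Injectivity of the finite prefix, as in Proposition~\ref{prop:field-recursion}, then gives \(\mu_B^+\ne\mu_B^-\). Negative \(H\) follows by spin flip, which proves \eqref{eq:bounded-gap-inclusion}, including the endpoints.

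\textbf{Step 3 (remaining assertions).} When \(q\theta^{L+1}\le1\), Theorem~\ref{thm:periodic-phase} gives uniqueness at all \(H\), including \(H=0\). So \(\mathcal C^{(L)}_\theta=\varnothing\), and \(H_c^{(L)}=0\) by definition \eqref{eq:Hc}. When \(q\theta^{L+1}>1\), the same theorem gives \(\mathcal C_\theta^{(L)}=[-H_c^{(L)},H_c^{(L)}]\) with \(H_c^{(L)}>0\). Optimality is immediate: \(\alpha^{(L)}\) is itself eventually \(L\)-bounded. Hence any field set guaranteed to give coexistence for all such sequences must lie in \(\mathcal C_\theta(\alpha^{(L)})=\mathcal C_\theta^{(L)}\), and Step 2 shows that \(\mathcal C_\theta^{(L)}\) achieves this.
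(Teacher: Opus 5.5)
Your argument is correct, and it reaches the inclusion \eqref{eq:bounded-gap-inclusion} by a different route from the paper's. Your claim (b) is exactly the paper's step for a negative fixed point $-t$ of $\Phi_{L,H}$. The two proofs part ways on the upper side.

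The paper does not locate the positive fixed point relative to $r_H$. Instead it uses spin flip and monotonicity in the field, $\Phi_{\ell,H}(t)\ge\Phi_{\ell,-H}(t)=-\Phi_{\ell,H}(-t)\ge t$, to obtain symmetric invariant intervals $[-M,-t]$ and $[t,M]$. It then invokes Proposition~\ref{prop:invariant-intervals}, which builds two splitting measures by diagonal extraction, and it treats $H=0$ separately via $F_\theta(u)<u$.

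You instead pivot on the attracting fixed point $r_H$ of $g_H$: extra one-child levels raise fields below $r_H$ and lower them above it. You place $x^+$ above the pivot via $\Phi_{L,H}(r_H)=Q_H(r_H)\ge r_H$. You then sandwich the finite-volume extremal fields directly, treating the truncated last gap as $\Phi_{k,H}(\pm\infty)$ with $k\le L$.

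This buys several things:
\begin{itemize}
\item It covers $H=0$ in the same stroke, since $r_0=0$.
\item It avoids the compactness argument.
\item It gives the sharper conclusion $v^-_{j_0}\le x^-<x^+\le v^+_{j_0}$. Every eventually $L$-bounded tree has extremal branch fields at least as far apart as the constant-gap tree. The paper's $t$ satisfies $\Phi_{L,H}(t)\ge t$, hence $t\le x^+$, so its upper interval is coarser.
\end{itemize}
In exchange, the paper's route yields a reusable invariant-interval criterion that it applies again in Corollary~\ref{cor:bounded-gap-explicit}.

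One inference should be written out: $\Phi_{L,H}(r_H)\ge r_H$ gives $x^+\ge r_H$ because $\Phi_{L,H}^M(+\infty)\ge\Phi_{L,H}^M(r_H)\ge r_H$ for every $M$. Once you have that, the case split ``if $x^+<r_H$'' in (a) is unnecessary.
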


\begin{proof}
The assertions about \(\mathcal C_\theta^{(L)}\), including its
endpoints, follow from Theorem~\ref{thm:periodic-phase}, applied to
the one-block period. It remains to prove the inclusion.

Let \(H>0\) belong to \(\mathcal C_\theta^{(L)}\).
The map \(\Phi_{L,H}\) has a unique positive fixed point.
Indeed, it maps \([H,H+qK]\) into itself, and each elementary
recursion map \(u\mapsto H+aF_\theta(u)\), \(a\in\{1,q\}\),
contracts logarithmic distances on this interval by the factor
\(qK/(H+qK)<1\), as in the proof of
Theorem~\ref{thm:unbounded-gaps}. A positive fixed point must lie in
this interval. Also \(\Phi_{L,H}(0)>0\), so zero is not fixed.
Theorem~\ref{thm:periodic-return} therefore supplies a negative
fixed point \(-t\), with \(t>0\).

Since \(g_H(-t)>-t\) and \(g_H\) is increasing, its iterates from
\(-t\) increase. Consequently, for every \(0\le\ell\le L\),
\[
\Phi_{\ell,H}(-t)\le\Phi_{L,H}(-t)=-t.
\]
Spin-flip symmetry and monotonicity in the field give
\[
\Phi_{\ell,H}(t)\ge\Phi_{\ell,-H}(t)
=-\Phi_{\ell,H}(-t)\ge t.
\]
Set \(M=qK+H\). Every block map has range in
\([H-qK,H+qK]\subseteq[-M,M]\), and \(t\le M\) because
\(-t\) is a fixed point. Thus \([-M,-t]\) and \([t,M]\)
are disjoint compact invariant intervals for every \(\ell\le L\).
Proposition~\ref{prop:invariant-intervals} yields coexistence on the
eventually bounded sequence; its proof also propagates distinct
profiles through any finite initial segment. Negative fields follow
by spin flip.

At \(H=0\), coexistence on the constant-gap tree entails
\(q\theta^{L+1}>1\), and there is a positive fixed point \(t\)
of \(\Phi_{L,0}\). The inequality
\(F_\theta(u)<u\) for \(u>0\) gives
\(\Phi_{\ell,0}(t)\ge\Phi_{L,0}(t)=t\) for \(\ell\le L\).
Oddness gives the negative inequality, and the same interval
argument applies. Finally, the constant-gap sequence itself belongs
to the class being compared. No field outside its coexistence set
can therefore be guaranteed uniformly over this class.
\end{proof}

The comparison is exact but uses the coexistence endpoint of a
periodic map. The following lower bound is explicit.

\begin{corollary}
\label{cor:bounded-gap-explicit}
Suppose \(\ell_j\le L\) eventually and \(q\theta^{L+1}>1\). Put
\begin{equation}\label{eq:bounded-gap-constants}
\lambda=\theta^{L+1},\qquad
D_L=1+q\theta\frac{1-\theta^L}{1-\theta},\qquad
s_*=\sqrt{\frac{q\lambda-1}{\lambda(q-\lambda)}}.
\end{equation}
Then
\begin{equation}\label{eq:bounded-gap-explicit}
0<h_*:=\frac{q\arctanh(\lambda s_*)-\arctanh s_*}{D_L}
=\frac{H_c^{\mathrm{reg}}(\lambda,q)}{D_L}
\le H_c^{(L)},
\end{equation}
and every \(|H|\le h_*\) gives coexistence on \(T_{\alpha,q}\).
For \(L=0\), this bound is the exact regular-tree endpoint.
\end{corollary}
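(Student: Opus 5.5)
Since Theorem~\ref{thm:bounded-gap-comparison} gives \(\mathcal C_\theta^{(L)}=[-H_c^{(L)},H_c^{(L)}]\subseteq\mathcal C_\theta(\alpha)\), it suffices to prove \(h_*\le H_c^{(L)}\). The identity \(h_*=H_c^{\mathrm{reg}}(\lambda,q)/D_L\) is immediate from Proposition~\ref{prop:regular-field} with \(\theta\) replaced by \(\lambda\in(1/q,1)\); there \(q\lambda>1\) and \(H_c^{\mathrm{reg}}(\lambda,q)>0\). By Theorem~\ref{thm:periodic-phase} for the one-block return map \(\Phi_{L,H}\), coexistence on the constant-gap tree at \(H>0\) is equivalent to \(\Phi_{L,H}\) having a negative fixed point. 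By the minimum function \(m(H)\) in that proof, this is equivalent to \(\min_{u\le0}(\Phi_{L,H}(u)-u)\le0\). So for \(0<H\le h_*\) I must produce some \(u\le0\) with \(\Phi_{L,H}(u)\le u\).

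The plan is to compare \(\Phi_{L,H}\) with the regular-tree map for the coupling \(\lambda\) at the field \(D_LH\). Writing \(g_H^{\ell}(u)\) as an iterate, induction together with \(F_\theta'\le\theta\) and the semigroup identity \(F_\theta^{\circ k}=F_{\theta^k}\) gives
\[
g_H^{L}(u)\le F_{\theta^L}(u)+H\bigl(1+\theta+\cdots+\theta^{L-1}\bigr).
\]
At each step \(F_\theta(x+c)\le F_\theta(x)+\theta c\) for \(c\ge0\). Applying \(Q_H\) and using the Lipschitz bound once more,
\[
\Phi_{L,H}(u)\le H+qF_\theta\bigl(F_{\theta^L}(u)\bigr)+q\theta H\frac{1-\theta^L}{1-\theta}
=D_LH+qF_{\lambda}(u).
\]
Thus \(\Phi_{L,H}(u)\le \widetilde Q(u):=D_LH+qF_\lambda(u)\) for all \(u\), and \(\widetilde Q\) is the regular \(q\)-ary recursion at coupling \(\lambda\) and field \(D_LH\le H_c^{\mathrm{reg}}(\lambda,q)\).

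By Proposition~\ref{prop:regular-field}, for \(0<D_LH\le H_c^{\mathrm{reg}}(\lambda,q)\) the map \(\widetilde Q\) has a fixed point \(\tilde u<0\). Concretely, \(G(u)=u-qF_\lambda(u)\) has local maximum \(H_c^{\mathrm{reg}}\) at \(-u_*\), and \(G\to-\infty\) as \(u\to-\infty\), so \(G(\tilde u)=D_LH\) is solvable with \(\tilde u\le-u_*<0\). Then \(\Phi_{L,H}(\tilde u)\le\widetilde Q(\tilde u)=\tilde u\), so \(D_H(\tilde u)\le0\) at a negative point, hence \(m(H)\le0\). By the monotonicity of \(m\) and the definition of \(H_c\) in Theorem~\ref{thm:periodic-phase}, this gives \(H\le H_c^{(L)}\). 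Taking \(H=h_*\) yields \eqref{eq:bounded-gap-explicit}. The case \(H=0\) is covered because \(q\theta^{L+1}>1\) gives zero-field coexistence, and negative fields follow by symmetry. For \(L=0\) we have \(D_0=1\) and \(\lambda=\theta\), so \(\widetilde Q=Q_H\) and \(h_*=H_c^{\mathrm{reg}}(\theta,q)=H_c^{(0)}\) exactly.

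The main obstacle is the direction of the Lipschitz comparison for negative \(u\), where the field shifts act positively. The inequality \(F_\theta(x+c)\le F_\theta(x)+\theta c\) for \(c\ge0\) holds for every real \(x\) because \(F_\theta'\le\theta\) everywhere. The shifts \(H\theta^k\) all add with the same sign, so the upper bound goes through without sign cases. One should also double-check that \(D_L\) includes the factor \(q\) on the inner shifts but not on the outer \(H\), which matches \eqref{eq:bounded-gap-constants}.
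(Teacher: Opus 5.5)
Your proof is correct, and its analytic core is the same as the paper's. The bound $\Phi_{L,H}(u)\le \Phi_{L,0}(u)+D_LH$, with $\Phi_{L,0}=qF_\lambda$, is the one-sided, $\ell=L$ case of the paper's estimate $|\Phi_{\ell,H}(u)-\Phi_{\ell,0}(u)|\le D_L|H|$. Both arguments then use the maximum $H_c^{\mathrm{reg}}(\lambda,q)$ of $qF_\lambda(t)-t$, attained at $t_*=u_*$.

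The routes diverge in how this becomes coexistence on $T_{\alpha,q}$. The paper evaluates the two-sided bound at $\pm t_*$ for every $\ell\le L$, using $F_\theta(t)<t$ to pass from $\ell+1$ to $L+1$ compositions. This gives the common invariant intervals $[-M,-t_*]$ and $[t_*,M]$, and the paper applies Proposition~\ref{prop:invariant-intervals} directly to the given tree. Running the same certificate on $\alpha^{(L)}$ then yields $h_*\le H_c^{(L)}$ straight from the definition of $H_c^{(L)}$ as a supremum.

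You instead prove only the intrinsic inequality $h_*\le H_c^{(L)}$, via a point $\tilde u<0$ below the diagonal and the function $m(H)$ from Theorem~\ref{thm:periodic-phase}. You delegate the shorter gaps $\ell<L$ and the endpoint inclusion to Theorem~\ref{thm:bounded-gap-comparison}. Your version is shorter: it handles one map and one inequality direction. It does, however, rest on the periodic fixed-point classification and on the comparison theorem.

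The paper's version needs neither. It also produces an explicit separation certificate: two tail boundary laws confined to $[t_*,M]$ and $[-M,-t_*]$, with $t_*$ independent of $H\in[-h_*,h_*]$ and of the particular gap sequence.

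You also do not need $m$ or the $H$-dependent point $\tilde u$. Your inequality already gives $\Phi_{L,H}(-u_*)\le D_LH-qF_\lambda(u_*)\le -u_*$. The intermediate value theorem on $(-\infty,-u_*]$ then supplies a negative fixed point alongside the positive one, and Theorem~\ref{thm:periodic-return} finishes.
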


\begin{proof}
The chain rule and \(0<F_\theta'\le\theta\) give, uniformly in
\(u,H\),
\[
0<\partial_H\Phi_{\ell,H}(u)
\le1+q\theta\sum_{k=0}^{\ell-1}\theta^k\le D_L
\qquad(0\le\ell\le L),
\]
where the sum is empty when \(\ell=0\). Hence
\(|\Phi_{\ell,H}(u)-\Phi_{\ell,0}(u)|\le D_L|H|\).
At zero field,
\[
\Phi_{\ell,0}(u)=qF_\theta^{\circ(\ell+1)}(u),\qquad
F_\theta^{\circ(L+1)}(u)=\arctanh(\lambda\tanh u).
\]
By Proposition~\ref{prop:regular-field} with parameter \(\lambda\),
the function \(qF_\theta^{\circ(L+1)}(t)-t\), \(t\ge0\),
attains its positive maximum
\(H_c^{\mathrm{reg}}(\lambda,q)\) at
\(t_*=\arctanh s_*>0\). Since \(F_\theta(t)<t\) for \(t>0\),
for every \(\ell\le L\) and \(|H|\le h_*\) we have
\[
\Phi_{\ell,H}(t_*)
\ge qF_\theta^{\circ(L+1)}(t_*)-D_L|H|\ge t_*.
\]
Oddness at zero field similarly gives
\(\Phi_{\ell,H}(-t_*)\le-t_*\). With \(M=qK+|H|\),
the intervals \([-M,-t_*]\) and \([t_*,M]\) are invariant;
here \(t_*<qK\) follows from
\(qF_\theta^{\circ(L+1)}(t_*)-t_*>0\).
Proposition~\ref{prop:invariant-intervals} proves coexistence,
including at equality. Applying this also to the constant-gap tree
gives \(h_*\le H_c^{(L)}\). For \(L=0\), \(\lambda=\theta\)
and \(D_L=1\).
\end{proof}

\begin{corollary}
\label{cor:gap-dichotomy}
For \(T_{\alpha,q}\), the following statements are equivalent:
\begin{enumerate}[label=\textup{(\roman*)}]
\item \(\sup_j\ell_j<\infty\);
\item the underlying graph is nonamenable;
\item for some \(0<\beta<\infty\) and \(B\ne0\), there is more
than one Gibbs measure.
\end{enumerate}
Here \(J>0\) is fixed. More quantitatively, an eventual gap bound
\(L\) guarantees a nontrivial interval of coexistence at every
temperature satisfying \(q\tanh(\beta J)^{L+1}>1\).
\end{corollary}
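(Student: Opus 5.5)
We prove the cycle (i)\(\Rightarrow\)(iii)\(\Rightarrow\)(i) from the gap theorems, and then show (i)\(\Leftrightarrow\)(ii) by a direct isoperimetric argument on \(T_{\alpha,q}\).

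For (i)\(\Rightarrow\)(iii), suppose \(\sup_j\ell_j=L<\infty\). Choose \(\beta\) large enough that \(\theta=\tanh(\beta J)\) satisfies \(q\theta^{L+1}>1\); this is possible because \(\theta\to1\) as \(\beta\to\infty\) and \(q\ge2\). Corollary~\ref{cor:bounded-gap-explicit} then gives coexistence for every \(|H|\le h_*\) with \(h_*>0\). Taking \(B=h_*/\beta\ne0\) yields (iii). The same corollary gives the quantitative addendum, valid at every temperature with \(q\tanh(\beta J)^{L+1}>1\). The implication (iii)\(\Rightarrow\)(i) is Corollary~\ref{cor:bounded-gaps-necessary}.

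For (i)\(\Leftrightarrow\)(ii), use the edge-isoperimetric constant of the underlying graph.

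\textbf{Unbounded gaps imply amenability.} A gap of length \(\ell_j\) is realized along every ray as a path through \(\ell_j\) degree-two vertices. Take as a finite set \(S\) the vertices of one such path. Its edge boundary consists of two edges, so \(|\partial S|/|S|=2/\ell_j\to0\). Hence the graph is amenable.

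\textbf{Bounded gaps imply nonamenability.} Here the main obstacle is an isoperimetric inequality uniform over all finite sets, not just level sets. I would reduce to the \(q\)-ary tree via the subdivision structure. Contract each maximal nonbranching path between consecutive branching vertices to a single edge, and handle the finite initial segment separately. Away from that finite part, the result is a subdivision of a regular rooted tree with all path lengths at most \(L+1\). The rooted \(q\)-ary tree with \(q\ge2\) has a positive edge-isoperimetric constant, since for any finite set \(S\) one has \(|\partial S|\ge (q-1)|S|\)-type bounds from counting leaves of subtrees.

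Nonamenability transfers to subdivisions with bounded path lengths. Given a finite \(S\) in \(T_{\alpha,q}\), let \(S'\) be the set of branching vertices in \(S\) together with those adjacent, via a path, to a subdivision vertex of \(S\). Then \(|S|\le (L+1)(\text{const})|S'|\) and \(|\partial S|\ge c|\partial S'|\), with bounds of order \(L\), \(q\). The one point needing care is that a partially included path contributes at least one boundary edge.

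Finally, modifying finitely many levels changes the graph only in a bounded region. This preserves nonamenability: it is a rough-isometry invariant for bounded-degree graphs, or one can argue directly by removing the finite initial ball from \(S\). This completes the equivalence.
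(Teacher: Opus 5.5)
Your proposal is correct. For (i)\(\Leftrightarrow\)(iii), the quantitative addendum, and the amenability of unbounded-gap trees, it matches the paper's proof; only the nonamenability of bounded-gap trees takes a different route.

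\textbf{Your route for bounded gaps.} You contract maximal nonbranching paths to obtain a \(q\)-ary tree with a pendant root edge. You then transfer its isoperimetric inequality back through comparisons \(|S|\le C|S'|\) and \(|\partial S|\ge c|\partial S'|\), or alternatively invoke quasi-isometry invariance of nonamenability.

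\textbf{The paper's route.} It counts directly in \(T_{\alpha,q}\), without passing to a contracted tree.
\begin{itemize}
\item For connected \(S\), the degree identity gives \(e(S)\ge(q-1)b(S)+1\).
\item Each component of nonbranching vertices of \(S\) is a path of at most \(M=\max\{n_1,\sup_j\ell_j\}\) vertices.
\item Following such a component to the child of its deepest vertex assigns it injectively either to a branching vertex of \(S\) or to a boundary edge.
\item This yields \(|S|\le\frac{1+qM}{q-1}e(S)\), with an explicit constant. The initial segment is absorbed into \(M\) rather than treated separately.
\end{itemize}

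\textbf{What each buys.} Your approach is more conceptual and works for any bounded subdivision of a bounded-degree nonamenable graph. Its cost is bookkeeping that you only sketch.
\begin{itemize}
\item For \(|\partial S|\ge c|\partial S'|\), a partially included path contributing a boundary edge is not enough. If a contracted boundary edge \(\{x,y\}\) of \(S'\) has \(x\notin S\), then \(x\in S'\) only because of an \(S\)-vertex on a different path at \(x\). The boundary edge of \(S\) you charge lies on that other path. Several contracted edges at \(x\) can be charged to it, so you must bound the multiplicity by the degree, roughly \(2(q+1)\).
\item Removing the initial ball from \(S\) only gives \(|\partial S|\ge c|S|-\mathrm{const}\). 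You would still need to handle small sets separately. It is simpler to treat the initial path as one more contracted edge.
\end{itemize}
With that bookkeeping, or with the quasi-isometry citation, your argument is complete.
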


\begin{proof}
If the gaps are unbounded, arbitrarily long paths consisting of
one-child vertices have exactly two boundary edges. They form a
F\o lner sequence, so the graph is amenable. If the gaps are bounded,
let
\(M=\max\{n_1,\sup_j\ell_j\}<\infty\). For a finite connected
vertex set \(S\), let \(b(S)\) count its branching vertices and
let \(e(S)\) count its boundary edges. The tree degree identity
gives \(e(S)\ge(q-1)b(S)+1\). Every component of the
nonbranching vertices of \(S\) is a path of at most \(M\) vertices;
following it towards descendants assigns it either to a branching
vertex in \(S\) or to a boundary edge, with no repetitions. Thus
\[
|S|\le b(S)+M\bigl(b(S)+e(S)\bigr)
\le\frac{1+qM}{q-1}\,e(S).
\]
The same bound holds componentwise for arbitrary finite sets.
The edge-isoperimetric constant is therefore positive, proving
(i)\(\Leftrightarrow\)(ii).

If (i) holds, choose \(L\) bounding the gaps and then a finite
\(\beta\) large enough that \(q\tanh(\beta J)^{L+1}>1\).
Corollary~\ref{cor:bounded-gap-explicit} gives a positive \(H\)
with coexistence, and \(B=H/\beta\ne0\). Conversely,
Theorem~\ref{thm:unbounded-gaps} excludes (iii) when (i) fails.
\end{proof}

The existence of a nonzero-field phase transition on a bounded-degree
nonamenable graph is already part of the general theorem of
Jonasson and Steif~\cite{JonassonSteif1999}. The additional content
here is the comparison at a fixed temperature and maximum gap,
together with the explicit field interval
\eqref{eq:bounded-gap-explicit}. The converse in this particular
non-quasi-transitive family follows from the unbounded-gap theorem.

\section{Supplementary criteria for general gap sequences}
\label{sec:sufficient}

We finish with two supplementary uniqueness bounds and the elementary
invariant-interval criterion used in the bounded-gap comparison.

\subsection{A field-dependent derivative bound}
For \(I_H=[H-qK,H+qK]\), let
\[
d_H=\dist(0,I_H)=\max\{0,|H|-qK\},\qquad
\rho_H=\sup_{u\in I_H}F_\theta'(u)=F_\theta'(d_H).
\]
Then \(0<\rho_H\le\theta\) and \(\rho_H\to0\) as
\(|H|\to\infty\).

\begin{proposition}
\label{thm:high-field-uniqueness}
If \(\rho_Hq^{\underline d}<1\), the Gibbs measure is unique.
If \(\rho_Hq^{\overline d}<1\), there is an \(r\in(0,1)\) such that
for every fixed \(k\) there is a finite \(C_k\) with
\[
0\le u_k^{+,N}-u_k^{-,N}\le C_kr^N
\]
for all sufficiently large \(N\). Both conclusions hold for all
sufficiently large \(|H|\).
\end{proposition}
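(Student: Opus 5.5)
The plan is a Lipschitz comparison of the plus and minus finite-volume profiles. Put \(\delta_m^N:=u_m^{+,N}-u_m^{-,N}\ge0\). After one recursion step every field lies in \(I_H\). So for \(m\le N-2\) both arguments \(u_{m+1}^{\pm,N}\) belong to \(I_H\), and the mean value theorem gives
\[
\delta_m^N=a_m\bigl(F_\theta(u_{m+1}^{+,N})-F_\theta(u_{m+1}^{-,N})\bigr)\le a_m\rho_H\,\delta_{m+1}^N .
\]
The identity \(\rho_H=F_\theta'(d_H)\) uses that \(F_\theta'\) is even and decreasing in \(|u|\). At the last step, \(\delta_{N-1}^N=2a_{N-1}K\le 2qK\). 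Iterating from level \(N-1\) down to level \(k\) yields
\[
0\le\delta_k^N\le 2qK\,\rho_H^{\,N-1-k}\prod_{m=k}^{N-2}a_m
\le 2qK\,\rho_H^{\,N-1-k}\,q^{\,b_{N-1}-b_k}.
\]

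For the first assertion, fix \(k\) and write the bound as \(C\,\rho_H^{N}q^{b_N}\), with \(C\) depending on \(k\) and \(H\). If \(\rho_Hq^{\underline d}<1\), choose \(\epsilon>0\) with \(\rho_Hq^{\underline d+\epsilon}<1\). By the definition of \(\underline d\) there are infinitely many \(N\) with \(b_N\le(\underline d+\epsilon)N\). Along those \(N\) the bound tends to zero. The fields \(u_k^{\pm,N}\) converge monotonically to \(u_k^\pm\) by Proposition~\ref{prop:field-recursion}, so convergence of \(\delta_k^N\) to zero along a subsequence already forces \(u_k^+=u_k^-\). Uniqueness then follows from the same proposition. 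The subsequence step is the main point to watch: only the \(\liminf\) density is needed precisely because the limit exists.

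For the second assertion, suppose \(\rho_Hq^{\overline d}<1\). Pick \(\epsilon\) with \(r:=\rho_Hq^{\overline d+\epsilon}<1\). Then \(b_N\le(\overline d+\epsilon)N\) for all large \(N\), and hence \(\delta_k^N\le C_k r^N\) with \(C_k=2qK\rho_H^{-1-k}\). (Bounding \(q^{-b_k}\) by one is harmless.) Finally, \(d_H\to\infty\) as \(|H|\to\infty\), and \(F_\theta'(u)\to0\) as \(u\to\infty\). Therefore \(\rho_H\to0\), and both hypotheses hold for all sufficiently large \(|H|\), since \(\overline d\le1\) gives \(q^{\overline d}\le q\). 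No step is a genuine obstacle; the only care needed is the first step, where the terminal infinite fields require separate treatment.
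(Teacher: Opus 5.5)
Your proposal is correct and follows the paper's proof essentially step for step. It uses the same mean-value recursion \(\delta_m^N\le a_m\rho_H\delta_{m+1}^N\) from the terminal difference \(2qK\), the same subsequence argument along depths with \(b_N\le(\underline d+\epsilon)N\) combined with monotone convergence of the extremal fields, and the same all-large-\(N\) bound under \(\overline d\); the only difference is cosmetic (you pass to \(\rho_H^Nq^{b_N}\) via \(b_{N-1}\le b_N\), which slightly changes the constant \(C_k\)).
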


\begin{proof}
Put \(\Delta_m^{(N)}=u_m^{+,N}-u_m^{-,N}\). The first step from
the boundary gives \(\Delta_{N-1}^{(N)}\le2qK\). At every earlier
step both input fields lie in \(I_H\), so the mean value theorem gives
\(\Delta_m^{(N)}\le a_m\rho_H\Delta_{m+1}^{(N)}\). Consequently,
\begin{equation}\label{eq:field-influence-bound}
\Delta_k^{(N)}\le2qK\rho_H^{N-k-1}q^{b_{N-1}-b_k}.
\end{equation}
If \(\rho_Hq^{\underline d}<1\), choose \(\varepsilon>0\) with
\(r_-:=\rho_Hq^{\underline d+\varepsilon}<1\). Infinitely many
\(N\) satisfy \(b_{N-1}\le(\underline d+\varepsilon)(N-1)\).
Along those depths,
\(\Delta_k^{(N)}\le2qK\rho_H^{-k}q^{-b_k}r_-^{N-1}\to0\).
The depth monotonicity of the extremal fields identifies their full
limits, so Proposition~\ref{prop:field-recursion} gives uniqueness.

Under the upper-density assumption, choose \(\varepsilon>0\) with
\(r:=\rho_Hq^{\overline d+\varepsilon}<1\). Now
\(b_{N-1}\le(\overline d+\varepsilon)(N-1)\) for all large \(N\),
and \eqref{eq:field-influence-bound} gives the assertion with
\(C_k=2qK\rho_H^{-k}q^{-b_k}r^{-1}\). Finally,
\(\overline d\le1\) and \(\rho_H\to0\) give the large-field claim.
\end{proof}

\subsection{Products of block contractions}
\begin{proposition}
\label{prop:gap-contraction}
Set \(L_{\ell,H}=\Lip(\Phi_{\ell,H}|_{I_H})\). If
\(\prod_{j=1}^M L_{\ell_j,H}\to0\) as \(M\to\infty\), the Gibbs
measure is unique. For \(H\ne0\), let \(r_H\) be the unique
fixed point of \(g_H\), and set
\(\lambda_H=F_\theta'(r_H)<\theta\). For each
\(\lambda\in(\lambda_H,1)\), there is a finite constant
\(C_{\lambda,\theta,H}\), independent of \(q\), such that
\[
L_{\ell,H}\le C_{\lambda,\theta,H}q\theta\lambda^\ell,
\qquad \ell\ge0.
\]
\end{proposition}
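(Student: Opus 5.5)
The proof has two parts. The first is the uniqueness criterion. Write \(\Delta_m=u_m^{+}-u_m^{-}\) for the limiting fields. Proposition~\ref{prop:field-recursion} gives \(u_m^\pm\in I_H\) for every level \(m\), so the values \(v_j^\pm=u_{n_j}^\pm\) lie in \(I_H\) and satisfy \(v_j^\pm=\Phi_{\ell_j,H}(v_{j+1}^\pm)\). Hence
\[
0\le v_1^+-v_1^-\le\Bigl(\prod_{j=1}^{M}L_{\ell_j,H}\Bigr)\bigl(v_{M+1}^+-v_{M+1}^-\bigr)\le 2qK\prod_{j=1}^{M}L_{\ell_j,H}.
\]
Letting \(M\to\infty\) gives \(v_1^+=v_1^-\), that is, \(u_{n_1}^+=u_{n_1}^-\). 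Proposition~\ref{prop:field-recursion} then gives uniqueness, since equality at one level suffices.

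The second part is the Lipschitz bound. Fix \(H\ne0\) and \(u\in I_H\), and let \(w_0=u\), \(w_k=g_H^k(u)\). The chain rule gives
\[
\Phi_{\ell,H}'(u)=qF_\theta'(w_\ell)\prod_{k=0}^{\ell-1}F_\theta'(w_k)\le q\theta\prod_{k=0}^{\ell-1}F_\theta'(w_k).
\]
Now use the contraction. Since \(g_H\) has Lipschitz constant \(\theta\) and \(u\in I_H\), we get \(|w_k-r_H|\le\theta^k|u-r_H|\le\theta^k(qK+K)\), using \(|u-r_H|\le |H|+qK+|r_H|\) together with \(|r_H-H|\le K\) and \(|u-H|\le qK\).

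This is where \(q\) enters, and removing it is the main obstacle. The starting distance is of order \((q+1)K\), and it takes about \(\log q/|\log\theta|\) steps to enter a neighbourhood of \(r_H\), so a naive count gives a \(q\)-dependent constant. To avoid this, I apply \(g_H\) once first: \(w_1=g_H(u)\in[H-K,H+K]\) for every real \(u\), since \(|F_\theta|<K\). This holds regardless of \(q\). The first factor \(F_\theta'(w_0)\le\theta\le1\) is simply bounded by one.

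The constant is then built as follows. Since \(F_\theta'\) is continuous, choose \(\delta>0\) with \(F_\theta'(w)\le\lambda\) whenever \(|w-r_H|\le\delta\). For \(k\ge1\), \(|w_k-r_H|\le\theta^{k-1}\cdot 2K\), because both \(w_1\) and \(r_H\) lie in \([H-K,H+K]\). So \(F_\theta'(w_k)\le\lambda\) once \(k\ge k_0\), where \(k_0=k_0(\lambda,\theta,H)\) is chosen so that \(2K\theta^{k_0-1}\le\delta\). Bound the \(k_0\) early factors by \(\theta\le1\) instead of \(\lambda\). This gives
\[
\prod_{k=0}^{\ell-1}F_\theta'(w_k)\le\lambda^{-k_0}\lambda^{\ell}.
\]
Therefore \(L_{\ell,H}\le C_{\lambda,\theta,H}\,q\theta\lambda^\ell\) with \(C_{\lambda,\theta,H}=\lambda^{-k_0}\). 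The number \(k_0\) depends only on \(\delta\), \(K\) and \(\theta\), and \(\delta\) depends only on \(\lambda,\theta,H\), so the constant is independent of \(q\).

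The strict inequality \(\lambda_H<\theta\) holds because \(r_H\ne0\) for \(H\ne0\) and \(F_\theta'\) is even and strictly decreasing on \((0,\infty)\), with maximum \(\theta\) at \(0\).
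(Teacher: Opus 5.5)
Your proof is correct and follows the paper's argument. The uniqueness part uses the same telescoping Lipschitz bound. For the decay bound you use the same key observation: one application of \(g_H\) sends \(I_H\) into \([H-K,H+K]\) independently of \(q\), after which contraction toward \(r_H\) forces the derivative factors below \(\lambda\). The only cosmetic difference is that you use the explicit estimate \(|w_k-r_H|\le 2K\theta^{k-1}\) in place of the paper's invariant neighborhood \(U\) of \(r_H\). This gives the uniform constant \(\lambda^{-k_0}\) directly, with no separate adjustment for small \(\ell\).
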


\begin{proof}
The limiting branch fields belong to \(I_H\), so for every \(M\),
\[
0\le v_1^+-v_1^-
\le\left(\prod_{j=1}^M L_{\ell_j,H}\right)
(v_{M+1}^+-v_{M+1}^-)
\le2qK\prod_{j=1}^M L_{\ell_j,H}.
\]
Equality at the first branching level implies uniqueness by
Proposition~\ref{prop:field-recursion}.

Since \(0<g_H'\le\theta<1\), the map \(g_H\) has a unique fixed
point \(r_H\). It is nonzero when \(H\ne0\), and
\eqref{eq:derivative} gives \(\lambda_H<\theta\). Choose a closed
neighborhood \(U\) of \(r_H\) with \(g_H(U)\subset U\) and
\(\sup_Ug_H'\le\lambda\). The first iterate of \(I_H\) lies in
\(J_H=[H-K,H+K]\), independently of \(q\). Uniform convergence of
\(g_H^n\) to \(r_H\) on \(J_H\) gives an integer \(n_0\), depending
only on \(\lambda,\theta,H\), such that
\(g_H^{n_0}(I_H)\subset U\). The chain rule therefore gives
\(\Lip(g_H^\ell|_{I_H})\le\theta^{n_0}\lambda^{\ell-n_0}\) for
\(\ell\ge n_0\). Enlarging a constant to include the finitely many
smaller \(\ell\), and using \(\Lip(Q_H)\le q\theta\), proves the bound.
\end{proof}

The product condition can use the frequency of moderate gaps.
Theorem~\ref{thm:unbounded-gaps} requires no such frequency control
when the gaps are unbounded.

\subsection{Common invariant intervals}
\begin{proposition}
\label{prop:invariant-intervals}
Suppose there are \(j_0\) and two disjoint compact intervals
\(I^-=[a^-,b^-]\), \(I^+=[a^+,b^+]\), with \(b^-<a^+\), such that
\[
\Phi_{\ell_j,H}(I^-)\subset I^-,\qquad
\Phi_{\ell_j,H}(I^+)\subset I^+
\quad\text{for all }j\ge j_0.
\]
Then there are at least two splitting Gibbs measures, and hence
\(\mu_B^-\ne\mu_B^+\).
\end{proposition}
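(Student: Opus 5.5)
The plan is to build two explicit boundary-law profiles along the branching levels, one confined to \(I^-\) and one to \(I^+\). Then I will check that they give distinct splitting Gibbs measures, and that the extremal minus and plus fields are separated accordingly.

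\emph{Step 1 (tail profiles by compactness).} For each \(M>j_0\), start at level \(n_M\) from the point \(a^-\in I^-\). Apply the block maps backwards and set
\[
v_j^{-,M}=\Phi_{\ell_j,H}\circ\cdots\circ\Phi_{\ell_{M-1},H}(a^-),\qquad j_0\le j\le M.
\]
By invariance, every \(v_j^{-,M}\) lies in \(I^-\). A diagonal extraction over \(j\) gives a subsequence along which \(v_j^{-,M}\to v_j^-\in I^-\) for every \(j\ge j_0\). Continuity of the block maps then yields \(v_j^-=\Phi_{\ell_j,H}(v_{j+1}^-)\). The same construction with \(I^+\) produces a solution \(v_j^+\in I^+\).

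Alternatively, monotonicity removes the need for compactness. Starting from \(b^+\), we have \(\Phi(b^+)\le b^+\), so the backward sequences are monotone in \(M\) and converge directly. I will use whichever version is cleaner.

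\emph{Step 2 (full profiles and Gibbs measures).} Fill each gap by the scalar recursion \(u_m=H+F_\theta(u_{m+1})\). Then propagate through the finite initial segment \(0\le m<n_{j_0}\) using \eqref{eq:cavity}. This gives two finite real profiles satisfying \eqref{eq:cavity} at every level. By the remark after Proposition~\ref{prop:field-recursion}, each profile defines a splitting Gibbs measure through \eqref{eq:markov-reconstruction}. The two profiles differ at level \(n_{j_0}\), because \(v_{j_0}^-\le b^-<a^+\le v_{j_0}^+\). Injectivity of the strictly increasing recursion maps then carries the difference to the root field, so \(u_0\) differs. The root marginals are therefore different, and the two measures are distinct.

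\emph{Step 3 (extremal states differ).} Let \(\mu\) be either splitting measure. Since \(\mu_B^-\preceq\mu\preceq\mu_B^+\) for every DLR measure, comparing root magnetizations gives
\[
\tanh u_0^-\le\tanh u_0(\mu^-)<\tanh u_0(\mu^+)\le\tanh u_0^+ .
\]
Hence \(u_0^-<u_0^+\), and Proposition~\ref{prop:field-recursion} gives \(\mu_B^-\ne\mu_B^+\).

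A more direct route is also available. At finite depth \(N=n_M\), the boundary value \(-\infty\) lies below \(a^-\). The iterates \(\Phi\circ\cdots\) started from \(-\infty\) are therefore at most those started from \(a^-\), which stay \(\le b^-\). Likewise the iterates from \(+\infty\) stay \(\ge a^+\). Passing to the limit gives \(v_{j_0}^{-}\le b^-<a^+\le v_{j_0}^{+}\) for the extremal fields themselves.

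The main point needing care is the construction of the tail profiles: one must pass to a limit while keeping the recursion exact. Using monotone iteration from the endpoints \(b^\pm\) (or \(a^\pm\)), together with invariance, avoids the compactness argument entirely. The finite initial segment is harmless, since its recursion maps are injective and finite-valued.
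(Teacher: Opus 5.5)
Your proposal is correct and follows the paper's proof: diagonal extraction of backward orbits kept in \(I^\pm\) by invariance, filling the gaps with \(g_H\) and the finite initial segment with the cavity recursion, and then using strict monotonicity to separate the root fields. Your optional monotone-iteration variant of the limit, and your direct comparison of the finite-volume extremal fields with the orbits started at \(a^-\) and \(a^+\), are also valid; they only make the final step \(\mu_B^-\ne\mu_B^+\) more explicit than the paper does.
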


\begin{proof}
For \(M\ge j_0\), choose terminal values
\(w_{M+1}^{\pm,M}\in I^\pm\) and set
\(w_j^{\pm,M}=\Phi_{\ell_j,H}(w_{j+1}^{\pm,M})\) for
\(j=M,\ldots,j_0\). Invariance keeps these fields in their respective
compact intervals. A diagonal subsequence as \(M\to\infty\) gives
\(w_j^\pm\in I^\pm\) with
\(w_j^\pm=\Phi_{\ell_j,H}(w_{j+1}^\pm)\) for all \(j\ge j_0\).
For \(n_j<m\le n_{j+1}\), set
\(\widetilde u_m^\pm=g_H^{n_{j+1}-m}(w_{j+1}^\pm)\), and at
branching levels set \(\widetilde u_{n_j}^\pm=w_j^\pm\).
These definitions agree at shared endpoints and satisfy
\eqref{eq:cavity}. Propagating through the finite initial segment
gives two full profiles and hence two Gibbs measures. Their fields at
\(n_{j_0}\) are strictly ordered. The finite composition carrying
them to the root is strictly increasing, so their root marginals
differ. The minimal and maximal Gibbs measures must therefore differ.
\end{proof}

For a finite set of allowed gaps only finitely many interval
inclusions need checking. For a periodic sequence,
Theorem~\ref{thm:periodic-return} gives the sharper exact criterion.

\section{Conclusion}

Nonzero-field coexistence is controlled by the gap sequence, which cannot be detected by branching density or number. Unbounded gaps provide uniqueness in nonzero homogenous fields. Bounded gaps allow for coexistence at limited temperatures and nonzero fields. The periodic tree with the biggest allowable gap provides the ideal coexistence interval, which is guarantyd uniformly throughout all ultimately bounded gap sequences.

The full-period map provides both an accurate Gibbs uniqueness criteria and a comprehensive fixed-point categorization for gaps that inevitably occur.
The nonzero-field coexistence endpoints are nondegenerate folds due to the negative Schwarzian derivative, which implies only three fixed points. The square-root law illustrates how two branches of boundary law merge.

Finite-volume estimations indicate how lengthy nonbranching sections reduce border influence. For given nonzero-field parameters in the regular tree's coexistence area, the density-one construction eliminates uniform rates across all density-one trees with unbounded gaps.
The precise coexistence interval for a limited aperiodic gap sequence, including endpoint behavior, is still unknown beyond the uniform comparison and adequate conditions shown here.

\appendix

\section{Direct zero-field proof and critical profiles}
\label{sec:zero-field-appendix}

We give a short direct proof of the known spherically symmetric criterion
in Theorem~\ref{thm:exact}, specialized from
\cite[Corollary~2.4]{PemantlePeres2010}, and record its critical consequences.
Throughout this appendix \(H=0\), \(0<\theta<1\), and
\(P_N=\theta^Nq^{b_N}\), with \(P_0=1\).

\begin{proof}[Direct proof of Theorem~\ref{thm:exact}]
For \(t>0\), put
\[
 \Psi_\theta(t)=\frac1{F_\theta(t)^2}-\frac1{\theta^2t^2}.
\]
Since
\(F_\theta'(t)=\theta(1-\tanh^2t)/(1-\theta^2\tanh^2t)<\theta\)
for \(t>0\), we have \(0<F_\theta(t)<\theta t\), so
\(\Psi_\theta(t)>0\).  The expansion
\(F_\theta(t)=\theta t-\theta(1-\theta^2)t^3/3+O(t^5)\)
and the limit \(F_\theta(+\infty)=\arctanh\theta\) give
\[
 \Psi_\theta(0)=\frac{2(1-\theta^2)}{3\theta^2},
 \qquad
 \Psi_\theta(+\infty)=\frac1{(\arctanh\theta)^2}.
\]
Thus \(\Psi_\theta\) extends to a continuous positive function on
\([0,+\infty]\), and constants \(0<c_\theta\le C_\theta<\infty\)
bound it from below and above.

Let \(h_m^{(N)}=u_m^{+,N}\) denote the plus cavity fields at zero
field, so \(h_N^{(N)}=+\infty\) and
\(h_m^{(N)}=a_mF_\theta(h_{m+1}^{(N)})\).
Set \(r_m^{(N)}=(h_m^{(N)})^{-2}\), with \((+\infty)^{-2}=0\).
The preceding identity, including its value at infinity, yields
\begin{align*}
 r_m^{(N)}
  &=(a_m\theta)^{-2}r_{m+1}^{(N)}
       +a_m^{-2}\Psi_\theta(h_{m+1}^{(N)}),\\
 r_0^{(N)}
  &=\theta^2\sum_{\ell=1}^N
           P_\ell^{-2}\Psi_\theta(h_\ell^{(N)}).
\end{align*}
Consequently,
\begin{equation}\label{eq:reciprocal-square-bounds}
 \theta^2c_\theta\sum_{\ell=1}^NP_\ell^{-2}
 \le\frac1{(h_0^{(N)})^2}
 \le\theta^2C_\theta\sum_{\ell=1}^NP_\ell^{-2}.
\end{equation}
If the series converges, these bounds keep \(h_0^{(N)}\) away from
zero.  The limiting plus root magnetization is then positive, and
spin-flip symmetry gives distinct plus and minus Gibbs measures.

If the series diverges, apply the same calculation from any fixed
level \(k\).  Indeed, with
\(P_{k,n}=\prod_{m=k}^{k+n-1}(a_m\theta)=P_{k+n}/P_k\),
\[
 \sum_{n\ge1}P_{k,n}^{-2}
   =P_k^2\sum_{\ell>k}P_\ell^{-2}=\infty,
 \qquad
 \frac1{(h_k^{(N)})^2}
   \ge\theta^2c_\theta\sum_{n=1}^{N-k}P_{k,n}^{-2}.
\]
Hence \(u_k^+=0=u_k^-\) for every \(k\).  The reconstruction following
Proposition~\ref{prop:field-recursion} identifies the plus and minus
Gibbs measures, and their extremal ordering gives uniqueness among
all DLR Gibbs measures.
\end{proof}

\begin{corollary}
\label{cor:zero-field-profiles}
Write \(d=\underline d(\alpha)\).  At zero field there is coexistence
if \(\theta q^d>1\) and uniqueness if \(\theta q^d<1\).
If \(d=0\), uniqueness holds at every finite inverse temperature.
If \(d>0\), the critical inverse temperature is
\[
 \beta_c=J^{-1}\arctanh(q^{-d}),
 \qquad
 \text{coexistence at }\beta_c
 \quad\Longleftrightarrow\quad
 \sum_{N\ge1}q^{-2(b_N-dN)}<\infty.
\]
In particular, the critical Gibbs measure is unique when \(d=1\).
For \(0<d<1\), a profile
\(b_N=dN+c\log N+O(1)\) gives critical coexistence exactly when
\(c>1/(2\log q)\).
\end{corollary}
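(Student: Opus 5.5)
The proof splits into two parts. The first part derives all the threshold statements directly from Theorem~\ref{thm:exact}, using \(b_N=dN+(b_N-dN)\). The second part treats the logarithmic profile as a direct test of convergence of the critical series.

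\textbf{Threshold statements.} Write the zero-field series as \(\sum_N (\theta q^{b_N/N})^{-2N}\). If \(\theta q^d>1\), choose \(\varepsilon>0\) with \(\theta q^{d-\varepsilon}>1\). Since \(b_N\ge(d-\varepsilon)N\) for all large \(N\), the terms are bounded by a geometric sequence, so the series converges. Theorem~\ref{thm:exact} then gives coexistence.

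If \(\theta q^d<1\), choose \(\varepsilon\) with \(\theta q^{d+\varepsilon}<1\). Infinitely many \(N\) satisfy \(b_N\le(d+\varepsilon)N\), and along those \(N\) the terms are \(\ge1\), so they tend to infinity. The series diverges, which gives uniqueness.

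\textbf{The case \(d=0\).} For every \(\theta<1\) we have \(\theta q^0<1\), so uniqueness holds at every finite inverse temperature.

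\textbf{The critical temperature for \(d>0\).} The map \(\beta\mapsto\theta=\tanh(\beta J)\) is increasing, so the two cases above locate the threshold at \(\theta=q^{-d}\), that is, \(\beta_c=J^{-1}\arctanh(q^{-d})\). At \(\theta=q^{-d}\) the terms become
\[
\theta^{-2N}q^{-2b_N}=q^{2dN-2b_N}.
\]
So the series in Theorem~\ref{thm:exact} is exactly \(\sum_N q^{-2(b_N-dN)}\), which proves the displayed equivalence.

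\textbf{The case \(d=1\).} Here \(b_N\le N\) trivially, so \(b_N-N\le0\). Every term is then \(\ge1\), the series diverges, and the critical Gibbs measure is unique.

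\textbf{The logarithmic profile.} Assume \(0<d<1\) and \(b_N=dN+c\log N+O(1)\). Then
\[
q^{-2(b_N-dN)}\asymp q^{-2c\log N}=N^{-2c\log q}.
\]
By comparison with a \(p\)-series, the sum converges if and only if \(2c\log q>1\), that is, \(c>1/(2\log q)\).

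This last step needs a check: does this profile have lower density exactly \(d\)? The correction \(c\log N+O(1)\) is \(o(N)\), so \(b_N/N\to d\), and \(\underline d=d\) as required. One should also confirm that such a profile is realizable, i.e. that \(b_N\) is nondecreasing with increments in \(\{0,1\}\). The statement asserts only the implication for a given profile, so realizability is not needed for the proof.

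No step is a real obstacle. The only point needing care is the dichotomy away from equality, where the \(\liminf\) must be used correctly: it gives a lower bound on \(b_N\) for all large \(N\), but an upper bound only along a subsequence.
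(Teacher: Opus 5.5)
Your proof is correct and follows the paper's argument essentially step for step. Both use the liminf correctly: a lower bound on $b_N$ for all large $N$, and an upper bound only along a subsequence. Both then substitute $\theta=q^{-d}$ into the series of Theorem~\ref{thm:exact}, use the bound $b_N\le N$ when $d=1$, and compare with $\sum N^{-2c\log q}$ for the logarithmic profile. One small wording fix in the uniqueness case: the terms tend to infinity along the subsequence because they are at least $(\theta q^{d+\varepsilon})^{-2N}$, not merely because they are $\ge1$, although being $\ge1$ infinitely often already forces divergence.
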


\begin{proof}
If \(\theta q^d>1\), choose \(\varepsilon>0\) such that
\(\theta q^{d-\varepsilon}>1\).  Eventually
\(b_N\ge(d-\varepsilon)N\), so the series in
Theorem~\ref{thm:exact} converges geometrically.  If
\(\theta q^d<1\), choose \(\varepsilon>0\) such that
\(\theta q^{d+\varepsilon}<1\).  Infinitely many \(N\) satisfy
\(b_N\le(d+\varepsilon)N\), along which \(P_N^{-2}\) does not tend to
zero.  This proves the strict regimes and, since \(\theta<1\) at finite
temperature, the assertion for \(d=0\).  For \(d>0\), substitute
\(\theta=q^{-d}\) in the exact series to obtain the critical criterion.
When \(d=1\), the bound \(b_N\le N\) makes every critical summand at
least one.  Finally, the logarithmic profile makes the critical series
comparable to \(\sum_{N\ge1}N^{-2c\log q}\), including divergence at
\(2c\log q=1\).
\end{proof}

\begin{example}
\label{ex:critical-log-profile}
Every \(d\in(0,1)\) and \(c\in\mathbb R\) in the preceding corollary
can be realized.  Choose an integer \(M\ge1\) so large that
\(0<d+c\log(1+1/(N+M))<1\) for all \(N\ge0\), and define
\[
 b_N=\lfloor dN+c\log(N+M)\rfloor-\lfloor c\log M\rfloor,
 \qquad
 \alpha=\{N\ge0:b_{N+1}-b_N=1\}.
\]
Then \(b_0=0\), all increments belong to \(\{0,1\}\), and
\(b_N=\#(\alpha\cap\{0,\ldots,N-1\})
      =dN+c\log N+O(1)\) as \(N\to\infty\).
Because \(d>0\), the branching set is infinite.
This is the familiar polynomial correction to critical exponential
level growth in the Pemantle--Peres criterion, expressed in terms of
admissible sparse branching levels.
\end{example}

For comparison, \(n_j\sim cj\), \(c\ge1\), gives \(d=1/c\) and
\(\beta_c=J^{-1}\arctanh(q^{-1/c})\).  At \(c=1\) critical uniqueness
is forced; at \(c>1\) sublinear counting fluctuations can affect
critical behavior.  The profiles \(n_j\sim j^\gamma\), \(\gamma>1\),
and \(n_j\sim e^{cj}\), \(c>0\), both give \(d=0\) and therefore
zero-field uniqueness at every finite temperature.

In our next studies, we are going to investigate the set $\textrm{ex}(\mathcal{G})\setminus\textrm{ex}(\mathcal{G}_{TI})$, and the absolute continuity of these measures. In the case of regular trees such investigates have been preformed in \cite{AMSa14,I,MBS162,MK23,MS20}.

\section*{Statements and Declarations}

\subsection*{Funding}
This work is not supported by any fund.

\subsection*{Competing Interests}
The authors have no relevant financial or non-financial interests
to disclose.

\subsection*{Author Contributions}
All authors contributed equally
to the conception, mathematical analysis, and preparation of the
manuscript. Both authors reviewed and approved the final manuscript.

\subsection*{Data Availability}
No datasets were generated or analysed during the current study.

\end{document}